\providecommand{\U}[1]{\protect \rule{.1in}{.1in}}
\newtheorem{theorem}{Theorem}[section]
\newtheorem{corollary}[theorem]{Corollary}
\newtheorem{proposition}[theorem]{Proposition}
\newenvironment{proof}[1][Proof]{\noindent \textbf{#1.} }{\  \rule{0.5em}{0.5em}}
\begin{document}

\title{A Game Theoretic Analysis\\of the Three-Gambler Ruin Game}
\author{Ath. Kehagias, G. Gkyzis, A. Karakoulakis, A. Kyprianidis}
\date{\today}
\maketitle

\begin{abstract}
We study the following game. Three players start with initial capitals of
$s_{1},s_{2},s_{3}$ dollars; in each round player $P_{m}$ is selected with
probability $\frac{1}{3}$; then \emph{he} selects player $P_{n}$ and they play
a game in which $P_{m}$ wins from (resp. loses to) $P_{n}$ one dollar with
probability $p_{mn}$ (resp. $p_{nm}=1-p_{mn}$). When a player loses all his
capital he drops out; the game continues until a single player wins by
collecting everybody's money.

This is a \textquotedblleft strategic\textquotedblright \ version of the
classical Gambler's Ruin game. It seems reasonable that a player may improve
his winning probability by judicious selection of which opponent to engage in
each round. We formulate the situation as a \emph{stochastic game} and prove
that it has at least one Nash equilibrium in deterministic stationary strategies.

\end{abstract}

\section{Introduction\label{sec01}}

In this paper we study a version of the three-gambler ruin problem in which,
whenever a gambler is \textquotedblleft activated\textquotedblright, he
\emph{chooses} his opponent. More specifically, we study our version is played
by the following rules.

\begin{enumerate}
\item The game is played in discrete time steps (rounds).

\item Three players start at the $0$-th round with initial capitals, of
$s_{1}$, $s_{2}$ and $s_{3}$ dollars.

\item In each round, one player is chosen with probability $\frac{1}{3}$ and
then \emph{he} chooses another player against whom he will play.

\item Suppose that the $m$-th player plays against the $n$-th one; with
probability $p_{mn}$ (resp. $p_{nm}=1-p_{mn\text{ }}$) he wins from (resp.
loses to) the $n$-th player one dollar.

\item The game continues until a single player wins (i.e. accumulates the
total $s_{1}+s_{2}+s_{3}$ dollars); if this never happens, the game continues
ad infinitum.
\end{enumerate}

\noindent The above game is a \textquotedblleft strategic\textquotedblright%
\ (in the \emph{game-theoretic} sense) version of the \textquotedblleft
classical\textquotedblright \ Gambler's Ruin (GR). Intuitively, it appears
reasonable that a player may improve his winning probability by judicious
opponent selection in each round. As far as we know, this game theoretic
approach to GR\ has not been explored previously.

The \textquotedblleft classic\textquotedblright \ Gambler's Ruin involves two
gamblers, is one of the earliest-studied probability problems (for a
historical review see \cite{Edwards1983,Song2013,Takacs1969}) and remains one
of the most popular introductory examples in the theory of Markov Chains. For
an overview of the basic results see \cite{Feller1991}.

An obvious generalization of GR\ is to have three or more gamblers. An early
attempt in this direction is \cite{Read1966} but the first (as far as we know)
major advance appeared in \cite{Sandell1988} and involved \textquotedblleft%
\emph{symmetric play}\textquotedblright, i.e., in each round all players have
equal probability of winning. Symmetric play was further studied in
\cite{Engel1993,Ross2009,Swan2006} and numerous other publications
\cite{Amano2001,Bruss2003,Chang1995,Cho1996,David2015,Grigorescu2016,Stirzaker1994,Stirzaker2006}%
. The case of \textquotedblleft \emph{asymmetric play}\textquotedblright, i.e.,
when players' winning probabilities are not necessarily equal, has also been
studied, first in \cite{Rocha1999,Rocha2004} and more recently in
\cite{Hashemi2011,Hussain2021,Hussain2023}. Finally versions played on graphs
have also been studied, for example in \cite{Neda2017}.

In all of the above works, no player \emph{strategy} is involved. In other
words, the players cannot influence who participates in a given round of the
game.\footnote{A notable exception appears in \cite{Ross2009}, where it is
stated that \textquotedblleft none of the preceding quantities [winning
probability, expected game duration etc.] depend on the rule for choosing the
players in each stage\textquotedblright. Here we have a case in which
strategies are implicitly considered, but they turn out to be irrelevant. This
is the case because the author studies \emph{symmetric play}; as we will see,
things are different for asymmetric play.}. Hence the evolution of the game is
governed by purely probabilistic laws.

As already mentioned, in this paper we take a different approach. In Section
\ref{sec02} we provide a rigorous formulation as a stochastic game. In Section
\ref{sec03} we provide an introductory analysis of the simple case in which
each player starts with initial capital of one dollar. In Section \ref{sec04}
we study the general case (i.e., with arbitrary initial capitals)\ and prove
that the game has at least one Nash equilibrium (NE)\ in deterministic
stationary strategies. \ In Section \ref{sec05} we discuss the computation of
the NE and provide several numerical experiments. In Section \ref{sec06} we
summarize our results and propose some future research directions.

It is worth emphasizing that our results can be generalized for the case of
more than three players; we limit ourselves to the three-player case mainly
for simplicity of presentation.

\section{The Game\label{sec02}}

We denote our game by $\Gamma_{3}\left(  \mathbf{p},K\right)  $. It involves
three players (gamblers) $P_{1},P_{2},P_{3}$ and has parameters $\mathbf{p}%
=\left(  p_{1},p_{2},p_{3}\right)  $ and $K$, which satisfy:%
\[
\forall n\in \left \{  1,2,3\right \}  :p_{n}\in \left(  0,1\right)  \text{ and
}K\in \left \{  3,4,,...\right \}  .
\]
$\Gamma_{3}\left(  \mathbf{p},K\right)  $ is played as follows:

\begin{enumerate}
\item At times $t\in \left \{  0,1,2,..\right \}  $ and for $n\in \left \{
1,2,3\right \}  $, $P_{n}$'s \emph{capital} is $s_{n}\left(  t\right)  $.

\item At $t=0$, the capitals satisfy $\  \sum_{n=1}^{3}s_{n}\left(  0\right)
=K$.

\item At $t\in \left \{  1,2,...\right \}  $ a player $P_{n}$ is selected
equiprobably from the set $\left \{  n:s_{n}\left(  t\right)  >0\right \}  $.

\item $P_{n}$ selects another player $P_{m}$ such that $s_{m}\left(  t\right)
>0$.

\item With probability $p_{nm}$ (resp. $p_{mn}$) $P_{n}$ receives one unit
from (resp. pays one unit to)$\ P_{m}$, where%
\[
p_{12}=p_{1},\text{\quad}p_{21}=1-p_{1},\quad p_{23}=p_{2},\text{\quad}%
p_{32}=1-p_{2},\quad p_{31}=p_{3},\text{\quad}p_{13}=1-p_{3}.
\]
Hence the players' capitals change as follows:

\begin{enumerate}
\item if $P_{n}$ wins then $s_{n}\left(  t\right)  =s_{n}\left(  t-1\right)
+1$ and $s_{m}\left(  t\right)  =s_{m}\left(  t-1\right)  -1$;

\item if $P_{n}$ loses then $s_{n}\left(  t\right)  =s_{n}\left(  t-1\right)
-1$ and $s_{m}\left(  t\right)  =s_{m}\left(  t-1\right)  +1$;

\item for $k\in \left \{  1,2,3\right \}  \backslash \left \{  n,m\right \}  $ we
have $s_{k}\left(  t\right)  =s_{k}\left(  t-1\right)  $.
\end{enumerate}

Obviously, at all $t$ we have $\sum_{n=1}^{3}s_{n}\left(  t\right)  =K$.

\item If at some time $t^{\prime}$ one player is left with zero capital, the
game continues between the two remaining players.

\item The game continues until at some time $t^{\prime \prime}$ there exists a
single player $P_{m}$ with $s_{m}\left(  t^{\prime \prime}\right)  =\sum
_{n=1}^{3}s_{n}\left(  0\right)  $, in which case this player is the winner.
If we such a player does not exist for any turn, the game continues ad infinitum.
\end{enumerate}

\bigskip

\noindent The \emph{game state} at time $t$ is $\mathbf{s}\left(  t\right)
=\left(  s_{1}\left(  t\right)  s_{2}\left(  t\right)  s_{3}\left(  t\right)
\right)  $. The \emph{state set} is
\[
S=\left \{  \left(  s_{1},s_{2},s_{3}\right)  :\forall n:s_{n}\geq0\text{ and
}\sum_{n=1}^{3}s_{n}=K\text{ }\right \}  .
\]
Note that each $\left(  s_{1},s_{2},s_{3}\right)  $ can be rewritten as
$\left(  s_{1},s_{2},K-s_{1}-s_{2}\right)  $. For $s_{1}=i$, we have $K-i+1$
possible states of the form $\left(  i,j,K-i-j\right)  $ with $j\in \left \{
0,1,...,K-i\right \}  $; since $i$ can take any value in $\left \{
0,1,...,K\right \}  $, the total number of states is
\[
N_{K}=\left \vert S\right \vert =\frac{\left(  K+1\right)  \left(  K+2\right)
}{2}.
\]
We define the following state sets:%
\[
S_{1}=\left \{  \left(  K,0,0\right)  \right \}  ,\quad S_{2}=\left \{  \left(
0,K,0\right)  \right \}  ,\quad S_{3}=\left \{  \left(  0,0,K\right)  \right \}
,\quad S_{i}=\left \{  \left(  s_{1},s_{2},s_{3}\right)  :\forall
n:s_{n}>0\right \}  .
\]
We will call the states $s\in S_{\tau}=S_{1}\cup S_{2}\cup S_{3}$
\emph{terminal}, the states $\mathbf{s}\in S_{i}$ \emph{interior} and the
states $s\in S_{b}=S\backslash S_{i}$ \emph{boundary}. It is easily checked
that $\left \vert S_{i}\right \vert =\frac{\left(  K-1\right)  \left(
K-2\right)  }{2}$ and $\left \vert S_{b}\right \vert =3K$. \noindent It will be
convenient to number the states so that: the first state is $(K,0,0)$, the
second is $(0,K,0)$ and the third is $(0,0,K)$; the remaining states can be
numbered arbitrarily.

\bigskip

\noindent The \emph{game history} at time $t$ is $\mathbf{h}\left(  t\right)
=\mathbf{s}\left(  0\right)  \mathbf{s}\left(  1\right)  ...\mathbf{s}\left(
t\right)  $. The set of all \emph{admissible} histories is denoted by $H$.
\emph{ A terminal history} is an $\mathbf{h}\left(  t\right)  =\mathbf{s}%
\left(  0\right)  \mathbf{s}\left(  1\right)  ...\mathbf{s}\left(  t\right)  $
such that $\mathbf{s}\left(  t\right)  \in S_{\tau}$.\emph{ }

\bigskip

\noindent For $n\in \left \{  1,2,3\right \}  $, $P_{n}$'s \emph{payoff} is
defined for every terminal history $\mathbf{h=s}\left(  0\right)
\mathbf{s}\left(  1\right)  ...\mathbf{s}\left(  t\right)  $ by
\[
\mathsf{Q}_{n}\left(  \mathbf{h}\right)  =\left \{
\begin{array}
[c]{ll}%
1 & \text{iff }\mathbf{s}_{n}\left(  t\right)  =K\\
0 & \text{else;}%
\end{array}
\right.
\]
payoff of nonterminal histories is zero.

\bigskip

\noindent A \emph{strategy} for $P_{n}$ is a function $\sigma_{n}%
:H\rightarrow \Delta_{3}\cup \left \{  \left(  0,0,0\right)  \right \}  $, where
\[
\Delta_{3}=\left \{  \left(  x_{n1},x_{n2},x_{n3}\right)  \text{ with }%
x_{nm}\geq0\text{ for }m\neq n\text{, }x_{nn}=0\text{, }\sum_{m=1}^{3}%
x_{nm}=1\right \}  .
\]
The interpretation is that%
\[
\sigma_{n}\left(  \mathbf{h}\right)  =\left(  x_{n1},x_{n2},x_{n3}\right)
\Leftrightarrow \left(  \forall m,n:\Pr \left(  P_{n}\text{ selects }P_{m}\text{
at }t\text{%
$\vert$%
}\mathbf{h}\right)  =x_{nm}\right)  .
\]
We will only consider \emph{admissible} strategies\footnote{For example, we
exclude strategies which assign positive probability to selecting a player
with zero capital, to a player selecting himself etc.}. \ A strategy profile
is a triple $\sigma=\left(  \sigma_{1},\sigma_{2},\sigma_{3}\right)  $. A
strategy $\sigma_{n}$ is called \emph{deterministic} iff
\[
\forall \mathbf{h}=\mathbf{s}\left(  0\right)  \mathbf{s}\left(  1\right)
...\mathbf{s}\left(  t\right)  :\sigma_{n}\left(  \mathbf{h}\right)  =\left(
x_{1}\left(  \mathbf{h}\right)  ,x_{2}\left(  \mathbf{h}\right)  ,x_{3}\left(
\mathbf{h}\right)  \right)  \text{ has a single nonzero element.}%
\]
A strategy $\sigma_{n}$ is called \emph{stationary} iff, for every
$\mathbf{h}$, $\sigma_{n}\left(  \mathbf{h}\right)  $ depends only on the last
state, i.e., iff%
\[
\forall \mathbf{h}=\mathbf{s}_{0}\mathbf{s}_{1}...\mathbf{s}_{t}:\sigma
_{n}\left(  \mathbf{h}\right)  =\sigma_{n}\left(  \mathbf{s}_{t}\right)  ;
\]
We will use the following shorter notation the $x_{mn}$'s:%
\[
x_{1}=x_{12}\text{ (and }1-x_{1}=x_{13}\text{),\quad}x_{2}=x_{23}\text{ (and
}1-x_{2}=x_{21}\text{),\quad}x_{3}=x_{31}\text{ (and }1-x_{3}=x_{32}\text{),}%
\]
and we will denote a stationary strategy profile by $\mathbf{x}=\left(
x_{1},x_{2},x_{3}\right)  $.

\bigskip

\noindent An initial state $\mathbf{s}\left(  0\right)  $ and a strategy
profile $\mathbf{\sigma}=\left(  \sigma_{1},\sigma_{2},\sigma_{3}\right)  $
define a probability measure on $H$. Hence the \emph{expected payoff} to
$P_{n}$\emph{ }is well defined by
\[
Q_{n}\left(  \mathbf{s}\left(  0\right)  ,\mathbf{\sigma}\right)
=\mathbb{E}_{\mathbf{s}\left(  0\right)  ,\mathbf{\sigma}}\left(
\mathsf{Q}_{n}\left(  \mathbf{h}\right)  \right)  .
\]
It is easily seen that $Q_{n}\left(  \mathbf{s}\left(  0\right)
,\mathbf{\sigma}\right)  $ is $P_{n}$'s probability of winning when the
starting state is $\mathbf{s}\left(  0\right)  $ and the players use the
strategy profile $\mathbf{\sigma}$.

\section{The case $K=3$\label{sec03}}

As a preliminary step in our analysis, let us consider the game when total
capital is $K=3$ (this is obviously the first nontrivial case)\ and, for
$n\in \left \{  1,2,3\right \}  $, $P_{n}$ uses the \emph{stationary} strategy
$\mathbf{x}_{n}=\left(  x_{n}\left(  \mathbf{s}\right)  \right)
_{\mathbf{s}\in S}$ . Suppressing the dependence on $\mathbf{x=}\left(
\mathbf{x}_{1},\mathbf{x}_{2},\mathbf{x}_{3}\right)  $, we denote $P_{n}$'s
payoff, when the game starts at state $\mathbf{s}$, by%
\[
V_{n}\left(  s\right)  =Q_{n}\left(  s,\mathbf{x}\right)  .
\]
Let us compute $\mathbf{V}_{1}\left(  \mathbf{s}\right)  =\left(  V_{1}\left(
\mathbf{s}\right)  \right)  _{\mathbf{s}\in S}$. First, we have
\[
V_{1}\left(  3,0,0\right)  =1,\quad V_{1}\left(  0,3,0\right)  =V_{1}\left(
0,0,3\right)  =V_{1}\left(  0,1,2\right)  =V_{1}\left(  0,2,1\right)  =0.
\]
Also, when two players are left in the game, the only admissible strategy for
each one is to select the other. For instance, in the state $\left(
1,2,0\right)  $, we must have
\[
x_{1}\left(  1,2,0\right)  =x_{12}\left(  1,2,0\right)  =1,\quad x_{2}\left(
1,2,0\right)  =x_{23}\left(  1,2,0\right)  =0.
\]
We have the following transition probabilites:%
\begin{align*}
\Pr \left(  \left(  2,1,0\right)  \rightarrow \left(  3,0,0\right)  \right)   &
=p_{1},\quad \Pr \left(  \left(  2,1,0\right)  \rightarrow \left(  1,2,0\right)
\right)  =1-p_{1},\\
\Pr \left(  \left(  1,2,0\right)  \rightarrow \left(  2,1,0\right)  \right)   &
=p_{1},\quad \Pr \left(  \left(  1,2,0\right)  \rightarrow \left(  0,3,0\right)
\right)  =1-p_{1}.
\end{align*}
Consequently%
\begin{align*}
V_{1}\left(  2,1,0\right)   &  =p_{1}V_{1}\left(  3,0,0\right)  +\left(
1-p_{1}\right)  V_{1}\left(  1,2,0\right)  ,\\
V_{1}\left(  1,2,0\right)   &  =p_{1}V_{1}\left(  2,1,0\right)  +\left(
1-p_{1}\right)  V_{1}\left(  0,3,0\right)  ,
\end{align*}
which becomes%
\begin{align*}
V_{1}\left(  2,1,0\right)   &  =p_{1}+\left(  1-p_{1}\right)  V_{1}\left(
1,2,0\right)  ,\\
V_{1}\left(  1,2,0\right)   &  =p_{1}V_{1}\left(  2,1,0\right)  .
\end{align*}
Solving the above system we get
\[
V_{1}\left(  2,1,0\right)  =\frac{p_{1}}{1-p_{1}+p_{1}^{2}},\quad V_{1}\left(
1,2,0\right)  =\frac{p_{1}^{2}}{1-p_{1}+p_{1}^{2}}.
\]
Similarly, we have
\begin{align*}
V_{1}\left(  2,0,1\right)   &  =\left(  1-p_{3}\right)  +p_{3}V_{1}\left(
1,0,2\right) \\
V_{1}\left(  1,0,2\right)   &  =\left(  1-p_{3}\right)  V_{1}\left(
2,0,1\right)
\end{align*}
and we get
\[
V_{1}\left(  2,0,1\right)  =\frac{1-p_{3}}{1-p_{3}+p_{3}^{2}},\quad
V_{1}\left(  1,0,2\right)  =\frac{1-2p_{3}+p_{3}^{2}}{1-p_{3}+p_{3}^{2}}.
\]
Finally, simplifying $x_{n}\left(  1,1,1\right)  $ to $x_{n}$ (for
$n\in \left \{  1,2,3\right \}  $) we have%
\begin{align*}
&  V_{1}\left(  1,1,1\right)  =\\
&  \hspace{5mm}\frac{1}{3}\left(  x_{1}\left(  p_{1}V_{1}\left(  2,0,1\right)
+\left(  1-p_{1}\right)  V_{1}\left(  0,2,1\right)  \right)  +\left(
1-x_{1}\right)  \left(  p_{3}V_{1}\left(  0,1,2\right)  +\left(
1-p_{3}\right)  V_{1}\left(  2,1,0\right)  \right)  \right) \\
&  +\frac{1}{3}\left(  x_{2}\left(  p_{2}V_{1}\left(  1,2,0\right)  +\left(
1-p_{2}\right)  V_{1}\left(  1,0,2\right)  \right)  +\left(  1-x_{2}\right)
\left(  p_{1}V_{1}\left(  2,0,1\right)  +\left(  1-p_{1}\right)  V_{1}\left(
0,2,1\right)  \right)  \right) \\
&  +\frac{1}{3}\left(  x_{3}\left(  p_{3}V_{1}\left(  0,1,2\right)  +\left(
1-p_{3}\right)  V_{1}\left(  2,1,0\right)  \right)  +\left(  1-x_{3}\right)
\left(  p_{2}V_{1}\left(  1,2,0\right)  +\left(  1-p_{2}\right)  V_{1}\left(
1,0,2\right)  \right)  \right)  .
\end{align*}
For example, $\frac{1}{3}x_{1}p_{1}V_{1}\left(  2,0,1\right)  $ is the
probability of:
\[
\text{ $P_{1}$ being selected, \hspace{3mm} $P_{1}$ selecting $P_{2}$
\hspace{3mm} and $P_{1}$ beating $P_{2}$. }%
\]
Substituting the known right hand values, after a considerable amount of
algebra we get
\begin{align*}
V_{1}\left(  1,1,1\right)   &  ={\frac{\left(  1-\mathit{p}_{\mathit{3}%
}\right)  \left(  \mathit{x}_{\mathit{1}}+1-\mathit{x}_{\mathit{2}}\right)
\mathit{p}_{\mathit{1}}}{3\, \left(  {\mathit{p}}_{{\mathit{3}}}%
^{2}-\,{\mathit{p}}_{{\mathit{3}}}+1\right)  }}+{\frac{\mathit{p}_{\mathit{1}%
}\, \left(  \mathit{x}_{\mathit{3}}+1-\mathit{x}_{\mathit{1}}\right)  \left(
1-\mathit{p}_{\mathit{3}}\right)  }{3\left(  \,{\mathit{p}}_{{\mathit{1}}}%
^{2}-\, \mathit{p}_{\mathit{1}}+1\right)  }}\\
&  +{\frac{\left(  1-\mathit{p}_{\mathit{3}}\right)  ^{2}\left(
\mathit{x}_{\mathit{2}}+1-\mathit{x}_{\mathit{3}}\right)  \left(
1-\mathit{p}_{\mathit{2}}\right)  }{3\left(  \,{\mathit{p}}_{{\mathit{3}}}%
^{2}-\, \mathit{p}_{\mathit{3}}+1\right)  }}+{\frac{{\mathit{p}}_{{\mathit{1}%
}}^{2}\left(  \mathit{x}_{\mathit{2}}+1-\mathit{x}_{\mathit{3}}\right)
\mathit{p}_{\mathit{2}}}{3\, \left(  {\mathit{p}}_{{\mathit{1}}}^{2}-\,
\mathit{p}_{\mathit{1}}+1\right)  }.}%
\end{align*}
This can also be written as%

\begin{equation}
V_{1}\left(  1,1,1\right)  =\frac{p_{1}(1-p_{3})(p_{1}+p_{3}-1)(p_{1}%
-p_{3})x_{1}+\Pi_{1}\left(  x_{2},x_{3}\right)  }{3\left(  1-p_{3}+p_{3}%
^{2}\right)  \left(  1-p_{1}+p_{1}^{2}\right)  } \label{eq03001}%
\end{equation}
where $\Pi_{1}\left(  x_{2},x_{3}\right)  $ is a first degree polynomial in
$x_{2}$ and $x_{3}$. Hence we have completed the computation of $\mathbf{V}%
_{1}=\left(  V_{1}\left(  \mathbf{s}\right)  \right)  _{\mathbf{s}\in S}$.

From (\ref{eq03001}) we can easily compute $P_{1}$'s \emph{optimal} strategy.
Since he wants to maximize his payoff by choice of $x_{1}=x_{1}\left(
1,1,1\right)  $, he will use the following strategy:%
\begin{equation}
\widehat{x}_{1}=\left \{
\begin{array}
[c]{cc}%
1\text{ iff:} & \text{either (}p_{1}+p_{3}-1>0\text{ and }p_{1}-p_{3}>0\text{)
or (}p_{1}+p_{3}-1<0\text{ and }p_{1}-p_{3}<0\text{)}\\
0\text{ iff:} & \text{either (}p_{1}+p_{3}-1>0\text{ and }p_{1}-p_{3}<0\text{)
or (}p_{1}+p_{3}-1<0\text{ and }p_{1}-p_{3}>0\text{)}%
\end{array}
\right.  \label{eq03002}%
\end{equation}
For $n\in \left \{  2,3\right \}  $, a similar avalysis yields $P_{n}$'s payoff
$\mathbf{V}_{n}$ (as a function of strategy profile $\mathbf{x=}\left(
x_{1},x_{2},x_{3}\right)  $) and his optimal strategy $\widehat{x}_{n}$. The
expressions for $\widehat{x}_{2}\left(  1,1,1\right)  $ and $\widehat{x}%
_{3}\left(  1,1,1\right)  $ are analogous to (\ref{eq03002}). It is easily
checked that $\left(  \widehat{x}_{1},\widehat{x}_{2},\widehat{x}_{3}\right)
$ is the unique Nash equilibrium of $\Gamma_{3}\left(  \mathbf{p},3\right)  $.

\section{The General Case\label{sec04}}

Let us now study the general case, i.e., when $K\in \left \{  3,4,...\right \}  $.

\subsection{Payoff Equations\label{sec0401}}

For a given stationary strategy profile $\mathbf{x}$, for any $\mathbf{s}%
=\left(  s_{1},s_{2},s_{3}\right)  \in S$ and any $n\in \left \{  1,2,3\right \}
$, let $V_{n}\left(  s_{1},s_{2},s_{3}\right)  =Q_{n}\left(  s_{1},s_{2}%
,s_{3},\mathbf{x}\right)  $ and $\mathbf{V}_{n}=\left(  V_{n}\left(
\mathbf{s}\right)  \right)  _{\mathbf{s}\in S}$. We will now write the
\emph{payoff system}, i.e., the system of equations satisfied by
$\mathbf{V}_{n}=\left(  V_{n}\left(  \mathbf{s}\right)  \right)
_{\mathbf{s}\in S}$.

To this end we start by recognizing that, for a given stationary strategy
profile $\mathbf{x}$, the process $\left(  \mathbf{s}\left(  t\right)
\right)  _{t=0}^{\infty}$ is a Markov chain. Let us define for each
$\mathbf{s}=\left(  s_{1},s_{2},s_{3}\right)  \in S$ the \emph{neighborhood
}$\mathbf{N}\left(  s_{1},s_{2},s_{3}\right)  $ of $\mathbf{s}$ to be the set
of states reachable in one time step from $\mathbf{s}$. For example, when
$\mathbf{s}\in S_{i}$, we have%
\[
\mathbf{N}\left(  s_{1},s_{2},s_{3}\right)  =\left \{  \left(  s_{1}%
+1,s_{2}-1,s_{3}\right)  ,\left(  s_{1}-1,s_{2}+1,s_{3}\right)  ,...,\left(
s_{1},s_{2}-1,s_{3}+1\right)  \right \}  .
\]
For every $\mathbf{s}\in S$ and $\mathbf{s}^{\prime}\in \mathbf{N}\left(
\mathbf{s}\right)  $ we define the transition probabilities%
\[
\pi_{\mathbf{s}^{\prime},\mathbf{s}}=\Pr \left(  \mathbf{s}\left(  t+1\right)
=\mathbf{s|s}\left(  t\right)  =\mathbf{s}^{\prime}\right)  .
\]
Then $\mathbf{V}_{1}$ satisfies the following equations.

\begin{enumerate}
\item For boundary states we have:%
\begin{equation}
V_{1}\left(  K,0,0\right)  =1; \label{eq04001}%
\end{equation}
for all $s_{2}\in \left(  0,1,...,K\right)  $:
\begin{equation}
V_{1}\left(  0,s_{2},K-s_{2}\right)  =0; \label{eq04002}%
\end{equation}
and for all $s_{1}\in \left(  1,...,K-1\right)  $:%
\begin{align*}
V_{1}\left(  s_{1},K-s_{1},0\right)   &  =\sum_{\mathbf{s}^{\prime}%
\in \mathbf{N}\left(  s_{1},K-s_{1},0\right)  }\pi_{\left(  s_{1}%
,K-s_{1},0\right)  ,\mathbf{s}^{\prime}}V_{1}\left(  \mathbf{s}^{\prime
}\right)  ,\\
V_{1}\left(  s_{1},0,K-s_{1}\right)   &  =\sum_{\mathbf{s}^{\prime}%
\in \mathbf{N}\left(  s_{1},0,K-s_{1}\right)  }\pi_{\left(  s_{1}%
,0,K-s_{1}\right)  ,\mathbf{s}^{\prime}}V_{1}\left(  \mathbf{s}^{\prime
}\right)  ,
\end{align*}
which becomes
\begin{align}
V_{1}\left(  s_{1},K-s_{1},0\right)   &  =p_{1}V_{1}\left(  s_{1}%
+1,K-s_{1}-1,0\right)  +\left(  1-p_{1}\right)  V_{1}\left(  s_{1}%
-1,K-s_{1}+1,0\right) \label{eq04003}\\
V_{1}\left(  s_{1},0,K-s_{1}\right)   &  =\left(  1-p_{3}\right)  V_{1}\left(
s_{1}+1,0,K-s_{1}-1\right)  +p_{1}V_{1}\left(  s_{1}-1,0,K-s_{1}+1\right)  .
\label{eq04004}%
\end{align}

\item For all interior states $\left(  s_{1},s_{2},s_{3}\right)  $ $\in
S_{i}\ $we have
\[
V_{1}\left(  s_{1},s_{2},s_{3}\right)  =\sum_{\mathbf{s}^{\prime}\in
\mathbf{N}\left(  s_{1},s_{2},s_{3}\right)  }\pi_{\left(  s_{1},s_{2}%
,s_{3}\right)  ,\mathbf{s}^{\prime}}V_{1}\left(  \mathbf{s}^{\prime}\right)
\]

\end{enumerate}

which can be written as either
\begin{align}
V_{1}\left(  s_{1},s_{2},s_{3}\right)   &  =\left(  \frac{x_{1}\left(
s_{1},s_{2},s_{3}\right)  +\left(  1-x_{2}\left(  s_{1},s_{2},s_{3}\right)
\right)  }{3}\right)  p_{1}V_{1}\left(  s_{1}+1,s_{2}-1,s_{3}\right)
\nonumber \\
&  +\left(  \frac{x_{1}\left(  s_{1},s_{2},s_{3}\right)  +\left(
1-x_{2}\left(  s_{1},s_{2},s_{3}\right)  \right)  }{3}\right)  \left(
1-p_{1}\right)  V_{1}\left(  s_{1}-1,s_{2}+1,s_{3}\right) \nonumber \\
&  +\left(  \frac{x_{2}\left(  s_{1},s_{2},s_{3}\right)  +\left(
1-x_{3}\left(  s_{1},s_{2},s_{3}\right)  \right)  }{3}\right)  p_{2}%
V_{1}\left(  s_{1},s_{2}+1,s_{3}-1\right) \nonumber \\
&  +\left(  \frac{x_{2}\left(  s_{1},s_{2},s_{3}\right)  +\left(
1-x_{3}\left(  s_{1},s_{2},s_{3}\right)  \right)  }{3}\right)  \left(
1-p_{2}\right)  V_{1}\left(  s_{1},s_{2}-1,s_{3}+1\right) \nonumber \\
&  +\left(  \frac{x_{3}\left(  s_{1},s_{2},s_{3}\right)  +\left(
1-x_{1}\left(  s_{1},s_{2},s_{3}\right)  \right)  }{3}\right)  p_{3}%
V_{1}\left(  s_{1}-1,s_{2},s_{3}+1\right) \nonumber \\
&  +\left(  \frac{x_{3}\left(  s_{1},s_{2},s_{3}\right)  +\left(
1-x_{1}\left(  s_{1},s_{2},s_{3}\right)  \right)  }{3}\right)  \left(
1-p_{3}\right)  V_{1}\left(  s_{1}+1,s_{2},s_{3}-1\right)  , \label{eq04006}%
\end{align}

\noindent or as%

\begin{align}
V_{1}\left(  s_{1},s_{2},s_{3}\right)   &  =\frac{x_{1}\left(  s_{1}%
,s_{2},s_{3}\right)  }{3}\left(  p_{1}V_{1}\left(  s_{1}+1,s_{2}%
-1,s_{3}\right)  +\left(  1-p_{1}\right)  V_{1}\left(  s_{1}-1,s_{2}%
+1,s_{3}\right)  \right) \nonumber \\
&  +\frac{\left(  1-x_{1}\left(  s_{1},s_{2},s_{3}\right)  \right)  }%
{3}\left(  \left(  1-p_{3}\right)  V_{1}\left(  s_{1}+1,s_{2},s_{3}-1\right)
+p_{3}V_{1}\left(  s_{1}-1,s_{2},s_{3}+1\right)  \right) \nonumber \\
&  +\frac{x_{2}\left(  s_{1},s_{2},s_{3}\right)  }{3}\left(  p_{2}V_{1}\left(
s_{1},s_{2}+1,s_{3}-1\right)  +\left(  1-p_{2}\right)  V_{1}\left(
s_{1},s_{2}-1,s_{3}+1\right)  \right) \nonumber \\
&  +\frac{\left(  1-x_{2}\left(  s_{1},s_{2},s_{3}\right)  \right)  }%
{3}\left(  \left(  1-p_{1}\right)  V_{1}\left(  s_{1}-1,s_{2}+1,s_{3}\right)
+p_{1}V_{1}\left(  s_{1}+1,s_{2}-1,s_{3}\right)  \right) \nonumber \\
&  +\frac{x_{3}\left(  s_{1},s_{2},s_{3}\right)  }{3}\left(  p_{3}V_{1}\left(
s_{1}-1,s_{2},s_{3}+1\right)  +\left(  1-p_{3}\right)  V_{1}\left(
s_{1}+1,s_{2},s_{3}-1\right)  \right) \nonumber \\
&  +\frac{\left(  1-x_{3}\left(  s_{1},s_{2},s_{3}\right)  \right)  }%
{3}\left(  \left(  1-p_{2}\right)  V_{1}\left(  s_{1},s_{2}-1,s_{3}+1\right)
+p_{2}V_{1}\left(  s_{1},s_{2}+1,s_{3}-1\right)  \right)  . \label{eq04005}%
\end{align}
For every $\left(  s_{1},s_{2},s_{3}\right)  \in S_{i}$, the values of the
$\pi_{\left(  s_{1},s_{2},s_{3}\right)  ,\mathbf{s}^{\prime}}$ probabilities
can be read from (\ref{eq04006}); for $\left(  s_{1},s_{2},s_{3}\right)  \in
S_{1}$, the values of the $\pi_{\left(  s_{1},s_{2},s_{3}\right)
,\mathbf{s}^{\prime}}$ probabilities can \ be read from (\ref{eq04003}%
\noindent)-(\ref{eq04004}).

The equations (\ref{eq04001}\noindent)-(\ref{eq04005}) are the \emph{payoff
system} for $\mathbf{V}_{1}$. We have similar payoff systems for
$\mathbf{V}_{2}$ and $\mathbf{V}_{3}$. In the rest of this section we will
focus on $\mathbf{V}_{1}$. Let $\Pi \left(  \mathbf{x}\right)  $ be the
transition probability matrix with $\Pi_{\mathbf{s,s}^{\prime}}\left(
\mathbf{x}\right)  =\pi_{\mathbf{s,s}^{\prime}}\left(  \mathbf{x}\right)  $
(for all $\mathbf{s,s}^{\prime}\in S$). Then the \emph{payoff system} can be
written as
\begin{equation}
V_{1}\left(  K,0,0\right)  =1\text{ and }V_{1}\left(  0,K,0\right)
=V_{1}\left(  K,0,0\right)  =0\text{ and }\mathbf{V}_{1}=\Pi \left(
\mathbf{x}\right)  \mathbf{V}_{1}\text{. } \label{eq04007}%
\end{equation}
This is a linear system and can be solved by the standard methods. An
alternative but equivalent form of (\ref{eq04007}) is the following. Recall
that we have numbered the states $\mathbf{s}\in S$ so that the first state is
$\left(  K,0,0\right)  $, the second is $\left(  0,K,0\right)  $ and the third
is $\left(  0,0,K\right)  $. Now define the matrix $\widetilde{\Pi}_{1}\left(
\mathbf{x}\right)  $ and the vector $\mathbf{b}_{1}$ as follows:%
\[
\widetilde{\Pi}_{1}\left(  \mathbf{x}\right)  =\left[
\begin{array}
[c]{cccc}%
0 & 0 & ... & 0\\
0 & 0 & ... & 0\\
0 & 0 & ... & 0\\
\pi_{4,1} & \pi_{4,2} & ... & \pi_{4,N_{K}}\\
... & ... & ... & ...\\
\pi_{N_{K},1} & \pi_{N_{K},2} & ... & \pi_{N_{K},N_{K}}%
\end{array}
\right]  ,\qquad \mathbf{b}_{1}=\left[
\begin{array}
[c]{c}%
1\\
0\\
0\\
0\\
...\\
0
\end{array}
\right]
\]
Then (\ref{eq04007}) is equivalent to
\begin{equation}
\mathbf{V}_{1}=\widetilde{\Pi}_{1}\left(  \mathbf{x}\right)  \mathbf{V}%
_{1}+\mathbf{b}_{1}\mathbf{.} \label{eq04008b}%
\end{equation}
If the payoff system (\ref{eq04008b}) has exactly one solution $\overline
{\mathbf{V}}_{1}$, then $\overline{V}_{1}\left(  s\right)  $ is $P_{1}$'s
payoff when the game starts with capitals $s=\left(  s_{1},s_{2},s_{3}\right)
$.

\subsection{Markov Chains\label{sec0402}}

As already mentioned, for every stationary strategy profile $\mathbf{x}$, the
process $\left(  \mathbf{s}\left(  t\right)  \right)  _{t=0}^{\infty}$ is a
Markov chain, with transition probability matrix $\Pi \left(  \mathbf{x}%
\right)  $.

\begin{proposition}
\normalfont \label{prop04002}If for all $m,n\in \left \{  1,2,3\right \}  $ we
have $p_{mn}\in \left(  0,1\right)  $ then, for every $\mathbf{x}$, every
interior state communicates with all terminal states.
\end{proposition}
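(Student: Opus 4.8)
The plan is to show that from any interior state $\mathbf{s}=(s_1,s_2,s_3)$ (all coordinates positive) one can exhibit, for each of the three terminal states, a finite chain of one-step transitions each of strictly positive probability under the Markov chain with matrix $\Pi(\mathbf{x})$. Since the terminal states are absorbing, ``communicates with'' here can only mean ``leads to'', i.e.\ positive-probability reachability in finitely many steps. By the relabeling symmetry among the three players it suffices to treat a single terminal state, say $(K,0,0)$, and I would build the path to it in two phases.

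In the first phase I drive $s_3$ down to $0$ while keeping $s_1,s_2>0$. The elementary observation is that at an interior state every active player is selected with probability $\tfrac{1}{3}$, and $P_3$'s stationary rule assigns positive probability to at least one opponent (the two probabilities $x_3(\mathbf{s})$ and $1-x_3(\mathbf{s})$ cannot both vanish). Hence the event ``$P_3$ is selected, $P_3$ plays that opponent, and $P_3$ loses the round'' has positive probability; the losing outcome is genuinely possible precisely because every $p_{mn}\in(0,1)$. Such a transition lowers $s_3$ by one and raises the opponent's capital by one, leaving the third coordinate fixed. Iterating, $s_3$ strictly decreases while $s_1$ and $s_2$ are nondecreasing, so after exactly $s_3$ such steps we reach a boundary state $(s_1',s_2',0)$ with $s_1'\ge s_1>0$ and $s_2'\ge s_2>0$; in particular no terminal state is visited during this phase.

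In the second phase only $P_1$ and $P_2$ remain active, each selected with probability $\tfrac{1}{2}$ and each forced to play the other. Since $p_{12}=p_1\in(0,1)$, the event that $P_1$ gains a dollar from $P_2$ in a given round has positive probability, whether $P_1$ or $P_2$ is the one selected. Thus $s_1$ can be increased by one with positive probability at each such boundary state, and iterating raises $s_1$ from $s_1'$ to $K$, ending at $(K,0,0)$. Concatenating the two phases yields a positive-probability path from $\mathbf{s}$ to $(K,0,0)$, and the identical construction with the players' roles permuted produces paths to $(0,K,0)$ and $(0,0,K)$ as well.

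The argument is essentially bookkeeping, and the point I would watch most carefully is the claim that the trajectory never leaves the admissible region nor lands on the wrong terminal state before completion. This is controlled entirely by the monotonicity built into the two phases: in Phase~1 neither $s_1$ nor $s_2$ ever decreases, and in Phase~2 $s_1$ only increases, so at each stage the only coordinate driven to zero is the intended one. A secondary point worth stating explicitly is that the construction uses no property of $\mathbf{x}$ beyond the nonemptiness of each player's strategy support, which is exactly why the conclusion holds uniformly over all stationary profiles $\mathbf{x}$.
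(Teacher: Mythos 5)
Your proof is correct and takes essentially the same route as the paper's: both use the fact that $p_{mn}\in(0,1)$ together with the nonempty support of each player's stationary rule to get a positive one-step probability of a designated player losing, iterate this to bankrupt that player while the other two capitals stay positive, and then finish with the two-player chain on the boundary. The only organizational difference is that you argue per target terminal state (making the no-wrong-absorption check explicit), whereas the paper bankrupts $P_{1}$ once to reach both $(0,K,0)$ and $(0,0,K)$ and then repeats the argument with $P_{2}$.
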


\begin{proof}
Take any state $\mathbf{s=}\left(  s_{1},s_{2},s_{3}\right)  \in S_{i}$ and
suppose that the game is in $\mathbf{s}$; let $\delta=\min_{m,n}p_{mn}$. Then
we have%
\begin{align*}
\Pr \left(  \text{\textquotedblleft}P_{1}\text{ is selected and he loses the
turn\textquotedblright}\right)   &  \geq \frac{1}{3}\left(  x_{12}\left(
\mathbf{s}\right)  p_{21}+x_{13}\left(  \mathbf{s}\right)  p_{31}\right) \\
&  \geq \frac{1}{3}\left(  x_{1}\left(  \mathbf{s}\right)  p_{21}+\left(
1-x_{1}\left(  \mathbf{s}\right)  \right)  p_{31}\right)  \geq \frac{\delta}%
{3}.
\end{align*}
In other words, there is a positive probability that the game moves to a state
$\mathbf{s}^{\prime}=\left(  s_{1}-1,s_{2}^{\prime},s_{3}^{\prime}\right)  $.
Hence there is a positive probability that eventually the game will reach a
boundary state $\mathbf{s}^{\prime \prime}=\left(  0,s_{2}^{\prime}%
,s_{3}^{\prime}\right)  $; either $\mathbf{s}^{\prime \prime}\ $is a terminal
state itself, or $\mathbf{s}^{\prime \prime}\in S\backslash \left(  S_{i}\cup
S_{\tau}\right)  $ and communicates with both $\left(  0,K,0\right)  $ and
$\left(  0,0,K\right)  $. Hence $\mathbf{s}$ communicates with both $\left(
0,K,0\right)  $ and $\left(  0,0,K\right)  $. Repeating the argument for
$P_{2}$ we see that $\mathbf{s}$ also communicates with $\left(  K,0,0\right)
$.
\end{proof}

\noindent By Proposition \ref{prop04002} and standard Markov chain results
\cite{GrinsteadSnellProb,KemenySnellFMC}, every nonterminal state is transient
and the game will reach \emph{some} terminal state w.p. 1 and in finite
expected time. Furthermore we have the following.

\begin{proposition}
\normalfont \label{prop04003}If for all $m,n\in \left \{  1,2,3\right \}  $ we
have $p_{mn}\in \left(  0,1\right)  $, then, for every $\mathbf{x}$, the limit
$\overline{\Pi}\left(  \mathbf{x}\right)  =\lim_{t\rightarrow \infty}\Pi
^{t}\left(  \mathbf{x}\right)  $ exists.
\end{proposition}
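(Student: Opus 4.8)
The plan is to use the fact that the three terminal states are \emph{absorbing}: once the chain enters $(K,0,0)$, $(0,K,0)$, or $(0,0,K)$, it never leaves, so the corresponding rows of $\Pi(\mathbf{x})$ are standard basis vectors. Since (by the numbering convention) these are exactly the first three states, I would write $\Pi(\mathbf{x})$ in canonical block form
\[
\Pi(\mathbf{x})=\begin{pmatrix} I & 0 \\ R & Q \end{pmatrix},
\]
where $I$ is the $3\times 3$ identity on the terminal states, $Q$ is the $(N_K-3)\times(N_K-3)$ substochastic block recording transitions among the nonterminal states, and $R$ collects the transitions from nonterminal states directly into terminal ones.

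A routine induction then gives
\[
\Pi^{t}(\mathbf{x})=\begin{pmatrix} I & 0 \\ \left(\sum_{k=0}^{t-1}Q^{k}\right)R & Q^{t}\end{pmatrix},
\]
so the existence of $\lim_{t\to\infty}\Pi^{t}(\mathbf{x})$ reduces to showing (i) $Q^{t}\to 0$ and (ii) $\sum_{k\ge 0}Q^{k}$ converges. Both follow from Proposition \ref{prop04002} and the remark immediately after it: every nonterminal state is transient, so starting from any nonterminal state the chain is absorbed with probability one. Concretely, the $\mathbf{s}$-th row sum of $Q^{t}$ is precisely the probability of not yet having been absorbed by time $t$ when starting from $\mathbf{s}$, and this tends to $0$; since the entries of $Q^{t}$ are nonnegative and bounded by the row sums, $Q^{t}\to 0$ entrywise. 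For a nonnegative matrix this is equivalent to $\rho(Q)<1$ (spectral radius), which makes $I-Q$ invertible and the Neumann series $\sum_{k\ge 0}Q^{k}=(I-Q)^{-1}$ convergent. Substituting yields
\[
\overline{\Pi}(\mathbf{x})=\lim_{t\to\infty}\Pi^{t}(\mathbf{x})=\begin{pmatrix} I & 0 \\ (I-Q)^{-1}R & 0\end{pmatrix},
\]
which establishes the claim.

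The argument is essentially routine once absorption has been established, and the substantive input, transience of every nonterminal state, is already supplied by Proposition \ref{prop04002}. The one point that warrants care is that one cannot simply invoke the standard ``aperiodic plus irreducible implies convergence'' theorem, because the chain is reducible. The correct framing is the absorbing-chain decomposition above; equivalently, one may observe that the only recurrent states are the three terminal ones, each a singleton with a self-loop and hence trivially aperiodic, so the chain is a reducible chain all of whose recurrent classes are aperiodic, a class for which $\lim_{t}\Pi^{t}$ is known to exist. I would present the block-matrix version, since it is self-contained and exhibits the limit explicitly.
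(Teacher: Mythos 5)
Your proof is correct and follows essentially the same route as the paper's: the canonical block decomposition $\Pi(\mathbf{x})=\left[\begin{smallmatrix} I & 0 \\ R & Q \end{smallmatrix}\right]$, the induction for $\Pi^{t}$, and the conclusion $Q^{t}\to 0$ with $\sum_{k\geq 0}Q^{k}=(I-Q)^{-1}$ via transience of the nonterminal states (the paper writes $W,U$ for your $R,Q$ and simply cites standard Markov chain references for the two limits you prove explicitly). Your additional details, the row-sum absorption argument and the spectral-radius justification of the Neumann series, are sound, and your summation index $\sum_{k=0}^{t-1}Q^{k}R$ is in fact the correct one (the paper's $\sum_{i=0}^{t}U^{i}W$ has a harmless off-by-one).
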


\begin{proof}
Recall that the three terminal states are numbered as first, second and third.
Since they are absorbing and all other states are transient, the probability
transition matrix has the form%
\begin{equation}
\Pi \left(  \mathbf{x}\right)  =\left[
\begin{array}
[c]{cc}%
I & 0\\
W & U
\end{array}
\right]  . \label{eq04009}%
\end{equation}
Then we have
\[
\Pi^{t}\left(  \mathbf{x}\right)  =\left[
\begin{array}
[c]{cc}%
I & 0\\
\left(  \sum_{i=0}^{t}U^{i}\right)  W & U^{t}%
\end{array}
\right]
\]
By standard Markov chain results \cite{GrinsteadSnellProb,KemenySnellFMC}
$\  \lim_{t\rightarrow \infty}U^{t}=0$ and $\lim_{t\rightarrow \infty}\sum
_{i=0}^{t}U^{i}$ exists and equals $\left(  I-U\right)  ^{-1}$. Hence
\[
\lim_{t\rightarrow \infty}\Pi^{t}\left(  \mathbf{x}\right)  =\left[
\begin{array}
[c]{cc}%
I & 0\\
\lim_{t\rightarrow \infty}\left(  \sum_{i=0}^{t}U^{i}\right)  W & U^{t}%
\end{array}
\right]  =\left[
\begin{array}
[c]{cc}%
I & 0\\
\left(  I-U\right)  ^{-1}W & 0
\end{array}
\right]  .
\]
This completes the proof.
\end{proof}

\begin{proposition}
\normalfont \label{prop04004}If the limit $\overline{\Pi}\left(
\mathbf{x}\right)  =\lim_{t\rightarrow \infty}\Pi^{t}\left(  \mathbf{x}\right)
$ exists then%
\[
\forall \left(  s_{1},s_{2},s_{3}\right)  \in S:\overline{\Pi}_{\left(
s_{1},s_{2},s_{3}\right)  ,\left(  K,0,0\right)  }\left(  \mathbf{x}\right)
=V_{1}\left(  s_{1},s_{2},s_{3}\right)  .
\]

\end{proposition}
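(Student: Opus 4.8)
The plan is to recognize that both sides of the claimed identity are the same object: the probability that the Markov chain $\left(\mathbf{s}(t)\right)_{t=0}^{\infty}$ generated by the stationary profile $\mathbf{x}$ is eventually absorbed at the terminal state $\left(K,0,0\right)$, given that it starts at $\left(s_{1},s_{2},s_{3}\right)$. Once each side is identified with this absorption probability, the equality is immediate, so the work is entirely in the two identifications.

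First I would handle the right-hand side. By the definition in Section~\ref{sec0401}, $V_{1}(s_{1},s_{2},s_{3})=Q_{1}(s_{1},s_{2},s_{3},\mathbf{x})=\mathbb{E}_{\mathbf{s},\mathbf{x}}\left(\mathsf{Q}_{1}(\mathbf{h})\right)$, and $\mathsf{Q}_{1}$ equals $1$ precisely on terminal histories whose last state has $s_{1}=K$, i.e. equals $\left(K,0,0\right)$, and equals $0$ otherwise. By Proposition~\ref{prop04002} and the remark following it, from any starting state the game reaches some terminal state with probability $1$, so the non-terminating histories form a null set and contribute nothing to the expectation. Hence $V_{1}(\mathbf{s})$ is exactly the probability that the terminal state reached is $\left(K,0,0\right)$, i.e. the absorption probability at $\left(K,0,0\right)$.

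Next I would handle the left-hand side. By definition $\overline{\Pi}_{\mathbf{s},(K,0,0)}(\mathbf{x})=\lim_{t\to\infty}\left(\Pi^{t}(\mathbf{x})\right)_{\mathbf{s},(K,0,0)}=\lim_{t\to\infty}\Pr\left(\mathbf{s}(t)=(K,0,0)\mid \mathbf{s}(0)=\mathbf{s}\right)$, the limit existing by hypothesis (indeed by Proposition~\ref{prop04003}). The key observation is that $\left(K,0,0\right)$ is absorbing, so the events $A_{t}=\{\mathbf{s}(t)=(K,0,0)\}$ are nested, $A_{t}\subseteq A_{t+1}$; by continuity of probability from below, $\lim_{t\to\infty}\Pr(A_{t})=\Pr\left(\bigcup_{t}A_{t}\right)$, which is precisely the probability of ever reaching (hence being absorbed at) $\left(K,0,0\right)$. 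Comparing with the previous paragraph yields the claim.

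The step deserving the most care is the passage $\lim_{t}\Pr(A_{t})=\Pr\left(\bigcup_{t}A_{t}\right)$, which is exactly where absorbingness of $\left(K,0,0\right)$ is used: without the monotonicity $A_{t}\subseteq A_{t+1}$ the finite-horizon probabilities need not increase to the hitting probability. A clean algebraic alternative avoids this measure-theoretic step: from $\overline{\Pi}=\lim_{t}\Pi^{t}$ one gets $\Pi\,\overline{\Pi}=\overline{\Pi}$, so the first column $\mathbf{a}=\overline{\Pi}\,\mathbf{b}_{1}$ of $\overline{\Pi}$ satisfies $\Pi\mathbf{a}=\mathbf{a}$, while the block form of $\overline{\Pi}$ from the proof of Proposition~\ref{prop04003} gives the boundary values $a_{(K,0,0)}=1$ and $a_{(0,K,0)}=a_{(0,0,K)}=0$; these are exactly the payoff system~(\ref{eq04007}), so once that system is known to be uniquely solvable one concludes $\mathbf{a}=\mathbf{V}_{1}$. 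I would present the probabilistic argument as primary, since it requires no uniqueness hypothesis, and keep the algebraic route as a cross-check.
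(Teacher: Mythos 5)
Your proof is correct, and its overall strategy --- showing that both sides of the identity equal the probability of absorption at $\left(K,0,0\right)$ --- is the same as the paper's; the difference lies in how the left-hand side is identified with that absorption probability. The paper argues by cases: for the three terminal starting states it reads off the entries directly, and for every other starting state it invokes the block form of $\overline{\Pi}\left(\mathbf{x}\right)$ obtained in the proof of Proposition \ref{prop04003}, interpreting the entry $\bigl(\bigl(\sum_{i=0}^{\infty}U^{i}\bigr)W\bigr)_{\mathbf{s},\left(K,0,0\right)}$ as a first-passage series (the probability of remaining nonterminal for $i$ steps and then entering $\left(K,0,0\right)$, summed over $i$, each such trajectory being a win for $P_{1}$). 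You instead use only two facts: that $\left(K,0,0\right)$ is absorbing, so the events $A_{t}=\{\mathbf{s}\left(t\right)=\left(K,0,0\right)\}$ are nested, and continuity of probability from below, so $\lim_{t}\Pr\left(A_{t}\right)=\Pr\bigl(\bigcup_{t}A_{t}\bigr)$ is the hitting probability. Your route is more elementary and works under exactly the stated hypothesis (mere existence of the limit), with no appeal to the block structure or to the conditions of Proposition \ref{prop04003}; the paper's route is less self-contained but yields the explicit series $\left(I-U\right)^{-1}W$, which is also the object one computes in practice. Your algebraic cross-check (the first column of $\overline{\Pi}$ is a fixed point of $\Pi$ with the right boundary values, hence equals $\mathbf{V}_{1}$) is sound as well, and you correctly flag that it needs uniqueness of solutions to the payoff system, which the paper only establishes afterwards via the harmonicity argument. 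One small remark: your appeal to Proposition \ref{prop04002} for the right-hand side is unnecessary, since $\mathbb{E}_{\mathbf{s},\mathbf{x}}\left(\mathsf{Q}_{1}\left(\mathbf{h}\right)\right)$ equals the probability of terminating at $\left(K,0,0\right)$ whether or not termination is almost sure --- but it is harmless.
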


\begin{proof}
\noindent Recall that
\[
\forall \left(  s_{1},s_{2},s_{3}\right)  \in S:V_{1}\left(  s_{1},s_{2}%
,s_{3}\right)  =\Pr \left(  \text{\textquotedblleft}P_{1}\text{ wins when the
game has started in }\left(  s_{1},s_{2},s_{3}\right)
\text{\textquotedblright}\right)
\]
We have the following possibilities.

\begin{enumerate}
\item If $\left(  s_{1},s_{2},s_{3}\right)  =\left(  K,0,0\right)  $, then
$\overline{\Pi}_{\left(  s_{1},s_{2},s_{3}\right)  ,\left(  K,0,0\right)
}\left(  \mathbf{x}\right)  =1=V_{1}\left(  K,0,0\right)  $.

\item If $\left(  s_{1},s_{2},s_{3}\right)  =\left(  0,K,0\right)  $, then
$\overline{\Pi}_{\left(  0,K,0\right)  ,\left(  K,0,0\right)  }\left(
\mathbf{x}\right)  =0=V_{1}\left(  0,K,0\right)  $.

\item If $\left(  s_{1},s_{2},s_{3}\right)  =\left(  0,0,K\right)  $, then
$\overline{\Pi}_{\left(  0,0,K\right)  ,\left(  K,0,0\right)  }\left(
\mathbf{x}\right)  =0=V_{1}\left(  0,0,K\right)  $.

\item If $\left(  s_{1},s_{2},s_{3}\right)  \in S\backslash S_{\tau}$, then
$\overline{\Pi}_{\left(  s_{1},s_{2},s_{3}\right)  ,\left(  K,0,0\right)
}\left(  \mathbf{x}\right)  =\left(  \left(  \sum_{i=0}^{\infty}U^{i}\right)
W\right)  _{\left(  s_{1},s_{2},s_{3}\right)  ,\left(  K,0,0\right)  }$. This
is the probability that: $\left(  s_{1}\left(  t\right)  ,s_{2}\left(
t\right)  ,s_{3}\left(  t\right)  \right)  $ stays in $S\backslash S_{\tau}$
for $i$ steps (for some $i\in \left(  0,1,...\right)  $) and then moves into
$\left(  K,0,0\right)  $. In every such case $P_{1}$ wins.
\end{enumerate}

\noindent This completes the proof.
\end{proof}

\bigskip

\noindent Hence we can compute $\mathbf{V}_{1}$ as the limit of $\Pi
^{t}\left(  \mathbf{x}\right)  $. From standard MC\ analysis
\cite{GrinsteadSnellProb,KemenySnellFMC} we can get estimates of the $\Pi^{t}$
rate of convergence, mean absorption time etc. Another way to compute
$\mathbf{V}_{1}$ is provided by the following.

\begin{proposition}
\normalfont Suppose that $\mathbf{x}$ is such that $\overline{\Pi}\left(
\mathbf{x}\right)  =\lim_{t\rightarrow \infty}\Pi^{t}\left(  \mathbf{x}\right)
$ exists. Choose$\ $an arbitrary $N_{K}\times1$ vector $\mathbf{U}%
_{1}^{\left(  0\right)  }$ and
\begin{equation}
\text{For}\ t\in \left \{  0,1,2...\right \}  \text{ let:}\  \mathbf{U}%
_{1}^{\left(  t+1\right)  }=\widetilde{\Pi}_{1}\left(  \mathbf{x}\right)
\mathbf{U}_{1}^{\left(  t\right)  }+\mathbf{b}_{1}\mathbf{.} \label{eq04008a}%
\end{equation}
The iteration (\ref{eq04008a}) converges and $\overline{\mathbf{U}}%
=\lim_{t\rightarrow \infty}\mathbf{U}^{\left(  t\right)  }$ is a solution of
the payoff system (\ref{eq04008b}).
\end{proposition}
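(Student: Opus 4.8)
The plan is to treat the iteration (\ref{eq04008a}) as the stationary affine recursion $\mathbf{U}_1^{(t+1)}=\widetilde{\Pi}_1(\mathbf{x})\mathbf{U}_1^{(t)}+\mathbf{b}_1$ and to show that the powers of its iteration matrix decay to zero; convergence of the iteration and the identification of its limit as a solution of (\ref{eq04008b}) then follow from the elementary theory of such iterations. The first step is to record the block structure inherited from (\ref{eq04009}). With the three terminal states numbered first, $\Pi(\mathbf{x})$ has the form displayed in (\ref{eq04009}), and since $\widetilde{\Pi}_1(\mathbf{x})$ is obtained from $\Pi(\mathbf{x})$ by replacing its first three rows with zeros,
\[
\widetilde{\Pi}_1(\mathbf{x})=\begin{bmatrix} 0 & 0 \\ W & U\end{bmatrix}.
\]

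Next I would compute the powers of this block matrix. A direct induction gives, for $t\geq 1$,
\[
\widetilde{\Pi}_1(\mathbf{x})^{t}=\begin{bmatrix} 0 & 0 \\ U^{t-1}W & U^{t}\end{bmatrix}.
\]
By the hypothesis that $\overline{\Pi}(\mathbf{x})$ exists, the argument in the proof of Proposition \ref{prop04003} supplies $\lim_{t\to\infty}U^{t}=0$ together with $\lim_{t\to\infty}\sum_{i=0}^{t}U^{i}=(I-U)^{-1}$. Hence $U^{t-1}W\to 0$ as well, so $\widetilde{\Pi}_1(\mathbf{x})^{t}\to 0$ and the Neumann series $\sum_{i\geq 0}\widetilde{\Pi}_1(\mathbf{x})^{i}$ converges to $(I-\widetilde{\Pi}_1(\mathbf{x}))^{-1}$.

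Finally I would unroll the recursion. Iterating (\ref{eq04008a}) from an arbitrary $\mathbf{U}_1^{(0)}$ yields
\[
\mathbf{U}_1^{(t)}=\widetilde{\Pi}_1(\mathbf{x})^{t}\,\mathbf{U}_1^{(0)}+\Big(\sum_{i=0}^{t-1}\widetilde{\Pi}_1(\mathbf{x})^{i}\Big)\mathbf{b}_1.
\]
Because $\widetilde{\Pi}_1(\mathbf{x})^{t}\to 0$, the first term vanishes in the limit irrespective of the starting vector, while the second converges to $\overline{\mathbf{U}}:=(I-\widetilde{\Pi}_1(\mathbf{x}))^{-1}\mathbf{b}_1$; this proves convergence. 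Passing to the limit in (\ref{eq04008a}) and using continuity of the linear map $\mathbf{U}\mapsto\widetilde{\Pi}_1(\mathbf{x})\mathbf{U}+\mathbf{b}_1$ gives $\overline{\mathbf{U}}=\widetilde{\Pi}_1(\mathbf{x})\overline{\mathbf{U}}+\mathbf{b}_1$, which is exactly the payoff system (\ref{eq04008b}); hence $\overline{\mathbf{U}}$ is a solution. The only delicate point is the decay $\widetilde{\Pi}_1(\mathbf{x})^{t}\to 0$: this is where the block-triangular form does the work, since zeroing the absorbing terminal rows removes the identity block of $\Pi(\mathbf{x})$ whose powers would not decay and leaves the purely transient block $U$, whose powers tend to zero by the transience of all nonterminal states established after Proposition \ref{prop04002}. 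Once this decay is in hand, the remaining steps are the routine convergence of an affine iteration with a sub-unit iteration matrix.
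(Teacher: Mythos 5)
Your proof is correct, but it takes a genuinely different route from the paper's. The paper's own proof is potential-theoretic and very terse: it observes that any solution of (\ref{eq04008b}) has its values pinned at the terminal states and equals a weighted average of neighboring values at all other states, i.e., is a harmonic function for the absorbing chain, and then appeals to the standard theory of such functions to conclude that the payoff system has a unique solution obtainable as the limit of (\ref{eq04008a}). You instead argue by explicit linear algebra: starting from the block form (\ref{eq04009}) you note that $\widetilde{\Pi}_{1}\left(\mathbf{x}\right)$ is $\Pi\left(\mathbf{x}\right)$ with its three absorbing rows zeroed, so it has block form $\begin{bmatrix}0 & 0\\ W & U\end{bmatrix}$; you compute by induction that its $t$-th power is $\begin{bmatrix}0 & 0\\ U^{t-1}W & U^{t}\end{bmatrix}$, deduce $\widetilde{\Pi}_{1}\left(\mathbf{x}\right)^{t}\rightarrow 0$ from $U^{t}\rightarrow 0$, and then unroll the affine recursion as $\mathbf{U}_{1}^{(t)}=\widetilde{\Pi}_{1}\left(\mathbf{x}\right)^{t}\mathbf{U}_{1}^{(0)}+\bigl(\sum_{i=0}^{t-1}\widetilde{\Pi}_{1}\left(\mathbf{x}\right)^{i}\bigr)\mathbf{b}_{1}$ and pass to the limit via the Neumann series. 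Your argument actually buys more than the paper's: it is self-contained, it makes explicit that the limit does not depend on the arbitrary initialization $\mathbf{U}_{1}^{(0)}$ (a point the statement requires but the paper's appeal to harmonicity leaves implicit), and it identifies the limit in closed form as $\left(I-\widetilde{\Pi}_{1}\left(\mathbf{x}\right)\right)^{-1}\mathbf{b}_{1}$, from which uniqueness of the solution of (\ref{eq04008b}) follows as well, since $I-\widetilde{\Pi}_{1}\left(\mathbf{x}\right)$ is invertible. One wording caution: the bare hypothesis that $\overline{\Pi}\left(\mathbf{x}\right)$ exists does not by itself give $U^{t}\rightarrow 0$ (the limit of $U^{t}$ could a priori be a nonzero idempotent); what gives the decay is the transience of all nonterminal states, which holds because the game's parameters satisfy $p_{mn}\in\left(0,1\right)$, via Proposition \ref{prop04002} and the standard results cited after it. You state this correctly in your closing remark, so this is a matter of phrasing rather than a gap.
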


\begin{proof}
\noindent The key fact here is that, for any solution $\mathbf{V}_{1}$ of the
payoff system:

\begin{enumerate}
\item when $\mathbf{s\in}S_{\tau}$, $V_{1}\left(  s\right)  $ is fixed to be
either 0 or 1;

\item when $\mathbf{s\in}S_{\tau}^{c}$, $V_{1}\left(  \mathbf{s}\right)  $ is
the weighted average of its neighbors:
\[
V_{1}\left(  \mathbf{s}\right)  =\sum_{\mathbf{s}^{\prime}\in N\left(
\mathbf{s}\right)  }\pi_{\mathbf{s,s}^{\prime}}V_{1}\left(  \mathbf{s}%
^{\prime}\right)  .
\]

\end{enumerate}

\noindent Consequently, every solution of the payoff system is a harmonic
function. Hence the payoff system has a unique solution $\mathbf{V}_{1}$ which
can be obtained as the limit of (\ref{eq04008a}). \noindent
\end{proof}

\subsection{Stationary Equilibria\label{sec0403}}

We want to show that $\Gamma_{3}\left(  \mathbf{p},K\right)  $ posseses a Nash
equilibrium (NE), i.e., a strategy profile $\widehat{\mathbf{\sigma}}$ such
that:%
\begin{equation}
\forall n,\forall \mathbf{\sigma}_{n}:Q_{n}\left(  \mathbf{s}\left(  0\right)
,\widehat{\mathbf{\sigma}}_{n},\widehat{\mathbf{\sigma}}_{-n}\right)  \geq
Q_{n}\left(  \mathbf{s}\left(  0\right)  ,\mathbf{\sigma}_{n},\widehat
{\mathbf{\sigma}}_{-n}\right)  \label{eq04031}%
\end{equation}
To this end we introduce an auxiliary \emph{discounted }game, denoted by
$\widetilde{\Gamma}_{3}\left(  \mathbf{p},K,\gamma \right)  $, where $\gamma
\in \left(  0,1\right)  $. $\widetilde{\Gamma}_{3}\left(  \mathbf{p}%
,K,\gamma \right)  $ is the same as $\Gamma_{3}\left(  \mathbf{p},K\right)  $
except for the following.

\begin{enumerate}
\item In $\widetilde{\Gamma}_{3}\left(  \mathbf{p},K,\gamma \right)  $ the
states $\left(  K,0,0\right)  $, $\left(  0,K,0\right)  $, $\left(
0,0,K\right)  $ are \emph{preterminal} and we introduce a new terminal state
$\overline{s}$; state transition probabilities remain the same except that
every preterminal state $s\in S_{\tau}$ transits to $\overline{s}$ with
probability one.

\item We define \emph{turn payoff} functions $q_{n}\left(  s\right)  $:%
\[
\forall n:q_{n}\left(  s\right)  =\left \{
\begin{array}
[c]{rl}%
1 & \text{iff }s\in S_{n},\\
0 & \text{else.}%
\end{array}
\right.
\]

\item The total payoff function $\widetilde{\mathsf{Q}}_{n}$ and the expected
total payoff function $\widetilde{Q}_{n}$ are defined as follows:%
\[
\widetilde{\mathsf{Q}}_{n}\left(  \mathbf{h}\right)  =\sum_{t=0}^{\infty
}\gamma^{t}q_{n}\left(  s\right)  ,\qquad \widetilde{Q}_{n}\left(  s\left(
0\right)  ,\sigma \right)  =\mathbb{E}_{\mathbf{s}\left(  0\right)
,\mathbf{\sigma}}\left(  \widetilde{\mathsf{Q}}_{n}\left(  \mathbf{h}\right)
\right)
\]

\end{enumerate}

$\bigskip$

\noindent$\widetilde{\Gamma}_{3}\left(  \mathbf{p},K,\gamma \right)  $ is a
\emph{discounted} stochastic game and, according to the following well known
theorem, possesses a stationary NE.

\begin{proposition}
[Fink \cite{Fink1961}]\normalfont Every $N$-player discounted stochastic game
has a stationary Nash equilibrium.
\end{proposition}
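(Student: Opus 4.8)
The plan is to prove the result by a Kakutani fixed-point argument applied to a best-response correspondence defined on the space of stationary strategy profiles. I would first fix notation for a general $N$-player discounted stochastic game: a finite state set $S$, for each player $n$ and each state $s$ a finite action set $A_n(s)$, transition probabilities depending on the current state and the joint action, bounded per-turn payoffs, and a discount factor $\gamma \in (0,1)$. A \emph{stationary} strategy for player $n$ assigns to each state $s$ a probability distribution over $A_n(s)$, so player $n$'s stationary strategy set $X_n$ is a finite product of simplices, hence nonempty, compact and convex; the profile space $X = \prod_n X_n$ inherits these properties.

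Next, for a fixed profile $\sigma \in X$ I would define each player's discounted value $V_n^{\sigma}$. Because the per-turn payoffs are bounded and $\gamma < 1$, the Bellman operator associated with $\sigma$ is a contraction of modulus $\gamma$ in the sup norm, so $V_n^{\sigma}$ exists and is its unique fixed point. Since this operator depends continuously on $\sigma$ and the contraction modulus is uniform in $\sigma$, the standard parametric-contraction argument shows that $V_n^{\sigma}$ depends \emph{continuously} on $\sigma$. This continuity is what makes the later fixed-point machinery go through.

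I would then define the best-response correspondence. Given $\sigma$, at each state $s$ player $n$ faces a one-shot optimization: choose a distribution $\mu_n$ over $A_n(s)$ to maximize the immediate expected payoff plus $\gamma$ times the expected continuation value $\sum_{s'} \pi(s' \mid s, \mu_n, \sigma_{-n}) V_n^{\sigma}(s')$, holding the continuation values $V_n^{\sigma}$ and the opponents' current behavior $\sigma_{-n}$ fixed. This objective is affine in $\mu_n$, so its set of maximizers is a nonempty compact convex face of the simplex, and the maximizations at distinct states are independent. Letting $BR_n(\sigma) \subseteq X_n$ collect the strategies that maximize at every state and setting $BR(\sigma) = \prod_n BR_n(\sigma)$, I obtain a correspondence on $X$ with nonempty, convex, compact values. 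Continuity of $V^{\sigma}$ together with Berge's maximum theorem yields upper hemicontinuity of $BR$, and Kakutani's fixed-point theorem then produces a profile $\widehat{\sigma} \in BR(\widehat{\sigma})$.

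The final and, I expect, the conceptually hardest step is to verify that such a fixed point is a genuine Nash equilibrium, i.e., that it is optimal not merely against stationary deviations but against all history-dependent deviations. The fixed-point property asserts only state-by-state (one-shot) optimality among stationary strategies, whereas (\ref{eq04031})-type conditions demand optimality against arbitrary $\sigma_n$. The bridge is the one-shot-deviation principle for discounted dynamic programming: at $\widehat{\sigma}$, the equalities $V_n^{\widehat{\sigma}}(s) = \max_{\mu_n} [\,r_n(s,\mu_n,\widehat{\sigma}_{-n}) + \gamma \sum_{s'} \pi(s' \mid s,\mu_n,\widehat{\sigma}_{-n}) V_n^{\widehat{\sigma}}(s')\,]$ say exactly that $V_n^{\widehat{\sigma}}$ solves the Bellman optimality equation for player $n$'s Markov decision problem against the fixed stationary behavior $\widehat{\sigma}_{-n}$. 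By the verification theorem for discounted MDPs, $V_n^{\widehat{\sigma}}$ is then the \emph{optimal} value for player $n$ against $\widehat{\sigma}_{-n}$, attained by $\widehat{\sigma}_n$, and no strategy whatsoever can do strictly better. Invoking this principle for every $n$ establishes (\ref{eq04031}) and completes the proof.
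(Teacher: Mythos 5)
Your proposal is correct in outline, but be aware that the paper does not prove this proposition at all: it is quoted as a known theorem, attributed to Fink \cite{Fink1961}, and Fink's equilibrium conditions (\ref{04051}) are then used as a black box in the proof of Proposition \ref{prop0401}. So you have supplied what the paper deliberately omits, and your argument is essentially the classical one (Fink's own proof is likewise a Kakutani fixed-point argument, applied to strategy--value pairs rather than to strategy profiles alone). You correctly navigate the two points where a naive attempt would fail. First, you do not define best responses through the full discounted payoff $\sigma_n \mapsto \widetilde{Q}_n\left(  \cdot,\sigma_n,\sigma_{-n}\right)$, which is a rational (not affine, and in general not quasi-concave) function of a stationary $\sigma_n$, so its argmax over $X_n$ need not be convex and Kakutani would not apply; instead you freeze the continuation values $V_n^{\sigma}$ and maximize a one-shot objective that is affine in the mixture at each state, which restores convexity of the best-response sets. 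Second, you close the gap between state-by-state optimality among stationary strategies and optimality against arbitrary history-dependent deviations via the Bellman optimality equation and the verification theorem for discounted MDPs; this is exactly what licenses the unrestricted quantifier over $\sigma_n$ in (\ref{eq04031}), and it in fact delivers the stronger Markov-perfect form of the statement (optimality from every initial state) that the paper's equation (\ref{04051}) encodes. Two caveats, presentational rather than substantive: your argument needs the state and action sets to be finite (as in Fink's setting and in $\Gamma_3\left(  \mathbf{p},K\right)$; for general state spaces the theorem can fail), and the continuity of $\sigma \mapsto V_n^{\sigma}$ is most cleanly justified by writing $V_n^{\sigma}=\left(  I-\gamma \Pi_{\sigma}\right)  ^{-1}\mathbf{r}_{\sigma}$ with $\Pi_{\sigma}$ and $\mathbf{r}_{\sigma}$ polynomial in $\sigma$ and $I-\gamma \Pi_{\sigma}$ invertible because $\gamma<1$, rather than by an appeal to a generic parametric-contraction principle.
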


\noindent In the following proposition we strengthen Fink's theorem for the
case of $\widetilde{\Gamma}_{3}\left(  \mathbf{p},K,\gamma \right)  $.

\begin{proposition}
\label{prop0401}\normalfont For every $\mathbf{p}$, $K$ and $\gamma$,
$\widetilde{\Gamma}_{3}\left(  \mathbf{p},K,\gamma \right)  $ has a
deterministic stationary Nash equilibrium. In other words, there exists a
$\widehat{\mathbf{x}}=\left(  \widehat{\mathbf{x}}_{1},\widehat{\mathbf{x}%
}_{2},\widehat{\mathbf{x}}_{3}\right)  $ such that%
\begin{equation}
\forall n\in \left \{  1,2,3\right \}  ,\forall \mathbf{s}_{0}\in S,\forall
\mathbf{x}^{n}:\widetilde{Q}^{n}\left(  \mathbf{s}_{0},\widehat{\mathbf{x}%
}_{n},\widehat{\mathbf{x}}_{-n}\right)  \geq \widetilde{Q}^{n}\left(
\mathbf{s}_{0},\mathbf{x}_{n},\widehat{\mathbf{x}}_{-n}\right)  .
\label{eq02011}%
\end{equation}

\end{proposition}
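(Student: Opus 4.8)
The plan is to start from the mixed stationary equilibrium guaranteed by Fink's theorem and to \emph{purify} it, exploiting the special transition structure of $\widetilde{\Gamma}_{3}(\mathbf{p},K,\gamma)$. The essential observation is that the one-step kernel is an average $\Pi(\mathbf{x})=\tfrac{1}{3}\big(\Pi^{(1)}(x_{1})+\Pi^{(2)}(x_{2})+\Pi^{(3)}(x_{3})\big)$, in which the block $\Pi^{(m)}$ depends only on $P_{m}$'s own mixing variable (the player who is actually selected that turn), while the turn payoffs $q_{n}$ do not depend on actions at all. Thus the game has additive rewards and additive transitions. Because opponents' stationary strategies are held fixed in a best-response computation, each player faces an ordinary $\gamma$-discounted Markov decision process, and, as is visible from (\ref{eq04005}), the quantity $P_{n}$ maximizes by his choice of $x_{n}(\mathbf{s})$ is \emph{affine} in $x_{n}(\mathbf{s})$: grouping the two lines carrying $x_{n}(\mathbf{s})$ and $1-x_{n}(\mathbf{s})$ exhibits the maximand as $x_{n}(\mathbf{s})\,A_{n}(\mathbf{s})+(1-x_{n}(\mathbf{s}))\,B_{n}(\mathbf{s})$, where $A_{n},B_{n}$ are determined by $P_{n}$'s own value vector $\mathbf{V}_{n}$ at neighboring states.

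First I would record the consequence for a single player. Since the maximand is affine in $x_{n}(\mathbf{s})$, a bang-bang choice $\widehat{x}_{n}(\mathbf{s})\in\{0,1\}$ attains the maximum at every state; hence, against any fixed opponent profile, $P_{n}$ has a \emph{deterministic} stationary best response achieving exactly the optimal value of his MDP. Moreover the optimal action at $\mathbf{s}$ is obtained by comparing $A_{n}(\mathbf{s})$ with $B_{n}(\mathbf{s})$, so it depends only on $\mathbf{V}_{n}$ and not on the opponents' choices \emph{at} $\mathbf{s}$; this is the dominant-action feature supplied by the additive structure. Consequently the equilibrium conditions (\ref{eq02011}) reduce to a coupled fixed point: find value vectors $(\mathbf{V}_{1},\mathbf{V}_{2},\mathbf{V}_{3})$ and a deterministic profile $\widehat{\mathbf{x}}$ such that each $\widehat{x}_{n}$ is greedy for $\mathbf{V}_{n}$ and each $\mathbf{V}_{n}$ solves the linear payoff system (\ref{eq04008b}) driven by $\widehat{\mathbf{x}}$.

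The hard part is the simultaneity. Purifying the players one at a time is not innocuous: at a state where $P_{m}$ is indifferent for his own payoff ($A_{m}=B_{m}$) the two pure choices induce genuinely different transitions, so replacing $P_{m}$'s mixed action by a pure one can alter the value vectors of the \emph{other} players and destroy their optimality. To defeat this I would argue by genericity and a limit. For a fixed deterministic profile $\mathbf{y}$ the values $V_{n}(\mathbf{s};\mathbf{p})$ are rational functions of $\mathbf{p}$ (Cramer's rule applied to (\ref{eq04008b})), and the indifference relation $A_{n}=B_{n}$ is the vanishing of a rational function; since there are only finitely many deterministic profiles, players, and states, all these indifference loci lie in a closed set $Z\subset(0,1)^{3}$ of measure zero. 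The intended conclusion is that for $\mathbf{p}\notin Z$ no exact indifference can arise at equilibrium, so every best response is uniquely pure and the profile produced by Fink's theorem is already deterministic.

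Finally I would remove the genericity restriction by compactness. Given an arbitrary $\mathbf{p}_{\ast}\in(0,1)^{3}$, choose $\mathbf{p}^{(k)}\to\mathbf{p}_{\ast}$ with $\mathbf{p}^{(k)}\notin Z$; each perturbed game has a deterministic stationary equilibrium, and since there are only finitely many deterministic profiles, some fixed profile $\widehat{\mathbf{x}}$ is an equilibrium for infinitely many $k$. The value vectors depend continuously (indeed rationally) on $\mathbf{p}$, and the equilibrium inequalities (\ref{eq02011}) are non-strict and hence closed, so passing to the limit shows $\widehat{\mathbf{x}}$ is a deterministic equilibrium of $\widetilde{\Gamma}_{3}(\mathbf{p}_{\ast},K,\gamma)$. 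The step I expect to be most delicate is the genericity claim itself — verifying that the indifference relations, evaluated at equilibrium values, really cut out proper (measure-zero) subsets despite the possible multiplicity of equilibria, which is precisely where the additive reward/transition structure must be invoked. An alternative that sidesteps the algebraic bookkeeping is to run the purification inside a Kakutani fixed-point argument on the space of value vectors, using the affineness in (\ref{eq04005}) to guarantee that each player's best-response correspondence always admits a pure selection.
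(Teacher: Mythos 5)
Your starting point is the same as the paper's: invoke Fink's theorem for the discounted game, observe that the additive-reward/additive-transition structure of $\widetilde{\Gamma}_{3}\left(\mathbf{p},K,\gamma\right)$ makes each player's maximand in the optimality equations (\ref{04051}) affine in his own mixing variable $x_{n}\left(\mathbf{s}\right)$, and conclude that pure maximizers always exist. The paper's proof essentially stops there: it notes that the maximum is attained at $x_{1}\left(\mathbf{s}\right)\in\left\{0,1\right\}$ and asserts that the equilibrium strategy \textquotedblleft is deterministic (or can be substituted by an equivalent deterministic one)\textquotedblright. That parenthetical is precisely the one-at-a-time purification you correctly identify as non-innocuous: when $A_{n}\left(\mathbf{s}\right)=B_{n}\left(\mathbf{s}\right)$, switching $P_{n}$ to a pure action preserves \emph{his} optimality (a stationary policy greedy with respect to the optimal value of his MDP is optimal), but it changes the transition kernel faced by the other two players, so their strategies need not remain best responses and the Fink system need not hold for the purified profile with the same value vectors. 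In this respect your diagnosis is sharper than the paper's own treatment of the tie case.

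However, your repair does not close the gap, and the gap is genuine. The genericity step is circular: your exceptional set $Z$ is a finite union of indifference loci computed from the value functions of \emph{deterministic} profiles, but the equilibrium supplied by Fink's theorem may be mixed, and a player who mixes at a state is indifferent with respect to the value function of that \emph{mixed} profile --- a rational function of $\mathbf{p}$ and of the continuum of mixing probabilities, which is not among the finitely many functions used to build $Z$. Hence $\mathbf{p}\notin Z$ does not imply that the Fink equilibrium exhibits no indifference, and you cannot conclude it is already pure; note also that the projection to $\mathbf{p}$-space of the indifference variety in the joint $\left(\mathbf{p},\mathbf{x}\right)$-space need not have measure zero. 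Your final compactness/limit step is sound as far as it goes (for a fixed deterministic profile the equilibrium property amounts to finitely many non-strict inequalities between functions continuous in $\mathbf{p}$, since deviations may be taken deterministic stationary, and there are finitely many deterministic profiles), but it has nothing to feed on without the genericity claim. The Kakutani alternative you sketch also fails as stated, because the set of \emph{pure} best responses is not convex-valued, which is exactly what Kakutani requires. To finish along your lines one needs an argument exploiting the additive structure to handle ties themselves --- in the spirit of Raghavan--Tijs--Vrieze on ARAT games --- rather than an argument that ties are non-generic among deterministic profiles.
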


\begin{proof}
According to \cite{Fink1961}, for a general $N$-player discounted stochastic
game, the following equations must be satisfied for all $n$ and $\mathbf{s}$
at a Nash equilibrium\footnote{We have modified Fink's notation, so as to
conform to our own.}:
\begin{equation}
\mathcal{V}_{n}\left(  \mathbf{s}\right)  =\max_{\mathbf{\sigma}_{n}\left(
\mathbf{s}\right)  }\sum_{a_{1}}...\sum_{a_{N}}\sigma_{1,a_{1}}\left(
\mathbf{s}\right)  ...\sigma_{N,a_{N}}\left(  \mathbf{s}\right)  \left[
q_{n}\left(  \mathbf{s}\right)  +\gamma \sum_{\mathbf{s}^{\prime}}%
\mathcal{P}\left(  \mathbf{s}^{\prime}|\mathbf{s},a_{1},...,a_{N}\right)
\mathcal{V}_{n}\left(  \mathbf{s}^{\prime}\right)  \right]  , \label{04051}%
\end{equation}
where:

\begin{enumerate}
\item $\sigma_{n}$ is $P_{n}$'s strategy, $\sigma_{n}\left(  \mathbf{s}%
\right)  $ is his action probability vector at state $\mathbf{s}$ and
$\sigma_{n,a_{n}}\left(  \mathbf{s}\right)  $ is the probability that, given
the current game state is $\mathbf{s}$, $P_{n}$ plays action $a_{n}$.

\item $\mathcal{V}_{n}\left(  \mathbf{s}\right)  $ is the expected payoff to
$P_{n}$ obtained at equilibrium, when the game has started at state
$\mathbf{s}$ and the strategy and the strategy profile $\sigma=\left(
\sigma_{1},...,\sigma_{N}\right)  $ is used.

\item $q_{n}\left(  \mathbf{s}\right)  $ is $P_{n}$'s turn payoff at state
$\mathbf{s}$.

\item $\mathcal{P}\left(  \mathbf{s}^{\prime}|\mathbf{s},a_{1},...,a_{N}%
\right)  $ is the probability that, given that the current state is
$\mathbf{s}$ and the player actions are $a_{1},...,a_{N}$, the next state is
$\mathbf{s}^{\prime}$.
\end{enumerate}

Let us now consider $\widetilde{\Gamma}_{3}\left(  \mathbf{p},K,\gamma \right)
$. Since no player has a choice of strategy on boundary states, we only need
to consider interior states $\mathbf{s}=\left(  s_{1},s_{2},s_{3}\right)  \in
S_{i}$. Letting $n=1$ and taking into account (\ref{eq04005}), (\ref{04051})
becomes%
\[
V_{1}\left(  s_{1},s_{2},s_{3}\right)  =\max_{x_{1}\left(  s_{1},s_{2}%
,s_{3}\right)  }\gamma G\left(  x_{1}\left(  s_{1},s_{2},s_{3}\right)
\right)
\]
where%
\begin{align*}
G\left(  x_{1}\left(  s_{1},s_{2},s_{3}\right)  \right)   &  =x_{1}\left(
s_{1},s_{2},s_{3}\right)  \frac{p_{1}V_{1}\left(  s_{1}+1,s_{2}-1,s_{3}%
\right)  +\left(  1-p_{1}\right)  V_{1}\left(  s_{1}-1,s_{2}+1,s_{3}\right)
}{3}\\
&  +\left(  1-x_{1}\left(  s_{1},s_{2},s_{3}\right)  \right)  \frac{\left(
1-p_{3}\right)  V_{1}\left(  s_{1}+1,s_{2},s_{3}-1\right)  +p_{3}V_{1}\left(
s_{1}-1,s_{2},s_{3}+1\right)  }{3}\\
&  +\text{terms which do not involve }x_{1}\left(  s_{1}s_{2}s_{3}\right)
\end{align*}
Clearly, the maximum is achieved at either $x_{1}\left(  s_{1},s_{2}%
,s_{3}\right)  =1$ or $x_{1}\left(  s_{1},s_{2},s_{3}\right)  =0$. This holds
for every interior state, hence $P_{1}$'s strategy at equilibrium is
deterministic (or can be substituted by an equivalent deterministic one).

\vspace{3mm}

\noindent The same analysis can be applied to $P_{2}$ and $P_{3}$ to complete
the proof.
\end{proof}

\bigskip

\hspace{10mm}

\begin{corollary}
\label{prop0404}\normalfont For every $\mathbf{p}$ and $K$, $\Gamma_{3}\left(
\mathbf{p},K\right)  $ has a deterministic stationary NE.
\end{corollary}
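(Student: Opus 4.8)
The plan is to derive the undiscounted result (Corollary~\ref{prop0404}) from the discounted result (Proposition~\ref{prop0401}) by a limiting argument as $\gamma \to 1$. The key observation is that the auxiliary game $\widetilde{\Gamma}_{3}\left(\mathbf{p},K,\gamma\right)$ was constructed precisely so that its discounted payoffs mimic the winning probabilities of $\Gamma_{3}\left(\mathbf{p},K\right)$: once the chain reaches a preterminal state $s\in S_n$, it contributes $\gamma^{t}$ to $P_n$'s total payoff at that single absorption time $t$ and zero thereafter. Since Proposition~\ref{prop04002} guarantees absorption in finite expected time regardless of $\mathbf{x}$, the discounted value $\widetilde{Q}_n\left(\mathbf{s}_0,\mathbf{x}\right)$ converges to the winning probability $Q_n\left(\mathbf{s}_0,\mathbf{x}\right)$ as $\gamma\to 1$, for every fixed deterministic stationary $\mathbf{x}$.

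First I would invoke Proposition~\ref{prop0401} to obtain, for each $\gamma\in(0,1)$, a deterministic stationary equilibrium $\widehat{\mathbf{x}}(\gamma)$ of $\widetilde{\Gamma}_{3}\left(\mathbf{p},K,\gamma\right)$. The crucial finiteness comes for free: a deterministic stationary strategy profile is just a choice of one of two pure actions at each of the finitely many interior states $\mathbf{s}\in S_i$ (the boundary and terminal states admit no real choice), so there are only finitely many such profiles. Hence I would take a sequence $\gamma_k\to 1$ and, by the pigeonhole principle, pass to a subsequence along which $\widehat{\mathbf{x}}(\gamma_k)$ is constant, equal to some fixed deterministic stationary profile $\widehat{\mathbf{x}}$. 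This sidesteps any need for topological compactness or continuity of a selection map.

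Next I would verify that this $\widehat{\mathbf{x}}$ is a Nash equilibrium of the undiscounted game $\Gamma_{3}\left(\mathbf{p},K\right)$. Fix a player $n$ and an arbitrary deterministic stationary deviation $\mathbf{x}_n$. For every $k$ the equilibrium condition~(\ref{eq02011}) gives
\[
\widetilde{Q}_n\left(\mathbf{s}_0,\widehat{\mathbf{x}}_n,\widehat{\mathbf{x}}_{-n}\right)\geq \widetilde{Q}_n\left(\mathbf{s}_0,\mathbf{x}_n,\widehat{\mathbf{x}}_{-n}\right)
\]
for the discount factor $\gamma_k$. Letting $k\to\infty$ and using the convergence $\widetilde{Q}_n\to Q_n$ established above on both sides, the inequality passes to the limit, yielding $Q_n\left(\mathbf{s}_0,\widehat{\mathbf{x}}_n,\widehat{\mathbf{x}}_{-n}\right)\geq Q_n\left(\mathbf{s}_0,\mathbf{x}_n,\widehat{\mathbf{x}}_{-n}\right)$. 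Since the one-shot deviation / harmonic structure of the payoff system shows it suffices to check deterministic stationary deviations, and since $\widehat{\mathbf{x}}$ is itself deterministic stationary, this establishes~(\ref{eq04031}) and completes the proof.

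The main obstacle is the limit interchange $\lim_{\gamma\to 1}\widetilde{Q}_n\left(\mathbf{s}_0,\mathbf{x}\right)=Q_n\left(\mathbf{s}_0,\mathbf{x}\right)$, which must be justified uniformly enough that it applies simultaneously to both sides of the equilibrium inequality. The honest way to secure this is to write $\widetilde{Q}_n\left(\mathbf{s}_0,\mathbf{x}\right)=\mathbb{E}\left(\gamma^{T}\mathbf{1}\{\text{absorbed in }S_n\}\right)$, where $T$ is the (a.s.\ finite) absorption time, and apply bounded convergence using $0\leq \gamma^{T}\leq 1$ together with $\gamma^{T}\to \mathbf{1}\{T<\infty\}=1$ pointwise; Proposition~\ref{prop04002} supplies $\Pr(T<\infty)=1$. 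A subtlety worth flagging is that the deviating strategy $\mathbf{x}_n$ in the undiscounted equilibrium condition ranges in principle over all admissible (possibly non-stationary, possibly mixed) strategies, so I must additionally argue — again via the harmonic/optimal-stopping characterization of the payoff, as in the proof of the last proposition — that no such deviation can beat the best deterministic stationary response, reducing the check to the finite family already handled.
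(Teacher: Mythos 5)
Your proof is correct in substance, but it takes a genuinely different route from the paper's. The paper fixes a \emph{single} discount factor $\gamma$ and argues that a deterministic stationary NE $\widehat{\mathbf{x}}$ of $\widetilde{\Gamma}_{3}\left(\mathbf{p},K,\gamma\right)$ is, as it stands, an NE of $\Gamma_{3}\left(\mathbf{p},K\right)$: it splits into the cases $\widetilde{Q}\left(\mathbf{s}_{0},\widehat{\mathbf{x}}\right)=\left(\gamma^{t},0,0\right)$ and $\left(0,0,0\right)$, translates a hypothetical winning deviation in the undiscounted game into a winning deviation in the discounted game, and derives a contradiction with the equilibrium hypothesis. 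You instead run the standard vanishing-discount argument: extract, by pigeonhole over the finitely many deterministic stationary profiles, a single profile $\widehat{\mathbf{x}}$ that is an equilibrium along a sequence $\gamma_{k}\to 1$; prove $\widetilde{Q}_{n}\left(\mathbf{s}_{0},\mathbf{x}\right)=\mathbb{E}\left(\gamma^{T}\mathbf{1}_{\mathrm{win}}\right)\to Q_{n}\left(\mathbf{s}_{0},\mathbf{x}\right)$ by bounded convergence using almost-sure absorption (Proposition \ref{prop04002}); and pass the finitely many equilibrium inequalities to the limit. Your route is more robust exactly where the paper is thin: for fixed $\gamma$ the discounted criterion penalizes slow wins, so a deviation that raises the winning probability while delaying the win could in principle be profitable in $\Gamma_{3}$ yet unprofitable in $\widetilde{\Gamma}_{3}$; the paper's transfer argument glosses over this tradeoff, and its case analysis covers only degenerate situations in which the equilibrium payoff vector is literally $\left(\gamma^{t},0,0\right)$ or $\left(0,0,0\right)$, which cannot occur from an interior initial state because the dynamics are genuinely stochastic (every interior state communicates with all three terminal states). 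The $\gamma\to 1$ limit makes the time penalty vanish and thereby closes precisely that hole, at the cost of the extra limiting apparatus. The one step you flag without proving --- that it suffices to test deterministic stationary deviations, because against fixed stationary opponents player $n$ faces a finite-state MDP whose reachability objective admits an optimal deterministic stationary policy --- is a standard fact but deserves an explicit citation or argument; note that the paper's own proof needs the same reduction (its NE condition (\ref{eq04031}) quantifies over \emph{all} admissible strategies) and omits it as well.
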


\begin{proof}
Let $\widehat{\mathbf{x}}\mathbf{=}\left(  \widehat{\mathbf{x}}_{1}%
,\widehat{\mathbf{x}}_{2},\widehat{\mathbf{x}}_{3}\right)  $ be a a
deterministic stationary NE of $\widetilde{\Gamma}_{3}\left(  \mathbf{p}%
,K,\gamma \right)  $. Suppose that, for some $\mathbf{s}_{0}$ and $t$, we have
\[
\widetilde{Q}\left(  \mathbf{s}_{0},\widehat{\mathbf{x}}\right)  =\left(
\widetilde{Q}_{1}\left(  \mathbf{s}_{0},\widehat{\mathbf{x}}\right)
,\widetilde{Q}_{2}\left(  \mathbf{s}_{0},\widehat{\mathbf{x}}\right)
,\widetilde{Q}_{3}\left(  \mathbf{s}_{0},\widehat{\mathbf{x}}\right)  \right)
=\left(  \gamma^{t},0,0\right)  .
\]
This means that, in $\widetilde{\Gamma}_{3}\left(  \mathbf{p},K,\gamma \right)
$, $P_{1}$ has won at time $t$. Applying the same $\widehat{\mathbf{x}}$ in
$\Gamma_{3}\left(  \mathbf{p},K,\gamma \right)  $, we have%
\[
Q\left(  \mathbf{s}_{0},\widehat{\mathbf{x}}\right)  =\left(  Q_{1}\left(
\mathbf{s}_{0},\widehat{\mathbf{x}}\right)  ,Q_{2}\left(  \mathbf{s}%
_{0},\widehat{\mathbf{x}}\right)  ,Q_{3}\left(  \mathbf{s}_{0},\widehat
{\mathbf{x}}\right)  \right)  =\left(  1,0,0\right)  .
\]
Clearly%
\[
\forall \mathbf{x}_{1}:Q_{1}\left(  \mathbf{s}_{0},\widehat{\mathbf{x}}%
_{1},\widehat{\mathbf{x}}_{-1}\right)  \geq Q_{1}\left(  \mathbf{s}%
_{0},\mathbf{x}_{1},\widehat{\mathbf{x}}_{-1}\right)  .
\]
Also, if there was some $\mathbf{y}_{n}$ (with $n\in \left \{  2,3\right \}  $)
such that
\[
Q_{1}\left(  \mathbf{s}_{0},\widehat{\mathbf{x}}_{n},\widehat{\mathbf{x}}%
_{-n}\right)  <Q_{1}\left(  \mathbf{s}_{0},\mathbf{y}_{n},\widehat{\mathbf{x}%
}_{-n}\right)
\]
then $P_{n}$ could win $\Gamma_{3}\left(  \mathbf{p},K\right)  $ (when
starting at $\mathbf{s}_{0}$) by using $\mathbf{y}_{n}$ against $\widehat
{\mathbf{x}}_{-n}$. But then he could also win $\widetilde{\Gamma}_{3}\left(
\mathbf{p},K,\gamma \right)  $ (when starting at $\mathbf{s}_{0}$) by using the
same $\mathbf{y}_{n}$ against $\widehat{\mathbf{x}}_{-n}$, so we would have%
\[
\widetilde{Q}_{1}\left(  \mathbf{s}_{0},\widehat{\mathbf{x}}_{n}%
,\widehat{\mathbf{x}}_{-n}\right)  <\widetilde{Q}_{1}\left(  \mathbf{s}%
_{0},\mathbf{y}_{n},\widehat{\mathbf{x}}_{-n}\right)  .
\]
But this is contrary to the hypothesis. The proof is similar for the case in
which, for some $\mathbf{s}_{0}$ and $t$,
\[
\widetilde{Q}\left(  \mathbf{s}_{0},\widehat{\mathbf{x}}\right)  =\left(
\widetilde{Q}_{1}\left(  \mathbf{s}_{0},\widehat{\mathbf{x}}\right)
,\widetilde{Q}_{2}\left(  \mathbf{s}_{0},\widehat{\mathbf{x}}\right)
,\widetilde{Q}_{3}\left(  \mathbf{s}_{0},\widehat{\mathbf{x}}\right)  \right)
=\left(  0,0,0\right)  .
\]
We conclude that:%
\[
\forall n\in \left \{  1,2,3\right \}  ,\forall \mathbf{s}_{0}\in S,\forall
\mathbf{x}^{n}:\widetilde{Q}_{n}\left(  s_{0},\widehat{\mathbf{x}}%
_{n},\widehat{\mathbf{x}}_{-n}\right)  \geq \widetilde{Q}_{n}\left(
s_{0},\mathbf{x}_{n},\widehat{\mathbf{x}}_{-n}\right)
\]
and the proof is complete.
\end{proof}

\subsection{Nonstationary Equilibria\label{sec0404}}

So far we have established that $\Gamma_{3}\left(  \mathbf{p},K\right)  $ has
at least one stationary NE. It is conceivable that the existence of
\emph{nonstationary} NE can also be established by standard game theoretical
arguments. We will only present some initial ideas on the subject, and
relegate a full analysis to the future.

We limit our discussion to a variant of the discounted game, which we call
$\Gamma_{3}^{\prime}\left(  \mathbf{p},K,\gamma \right)  $. This is the same as
$\widetilde{\Gamma}_{3}\left(  \mathbf{p},K,\gamma \right)  $, except in that
the losing players incur a cost of $-1$; hence a player who loses all his
capital at some time $t^{\prime}$, receives payoff $-\gamma^{t^{\prime}}$.

In $\Gamma_{3}^{\prime}\left(  \mathbf{p},K,\gamma \right)  $ a player wants to
win or, if this is not possible, to lose at the latest possible time. Under
the circumstances, it is conceivable that, for example, $P_{1}$ and $P_{2}$
form an alliance against $P_{3}$, with the goal of bankrupting him first. For
example, $P_{1}$ and $P_{2}$ might continuously play against $P_{3}$, forming
in effect a \  \textquotedblleft superplayer\textquotedblright \ with a higher
probability of defeating $P_{3}$. However, what is required is a mechanism to
enforce the alliance. This can be provided by using a \textquotedblleft
trigger\textquotedblright \ strategy: $P_{1}$ selects $P_{3}$ as long as
$P_{2}$ does the same; if at some round $P_{2}$ selects $P_{1}$, then $P_{1}$
selects $P_{2}$ in all subsequent rounds. It appears likely that, under
appropriate conditions on $\gamma$, this strategy will be a
(nonstationary)\ NE. A \textquotedblleft tit-for-tat\textquotedblright%
\ strategy might work in similar manner.

\section{Computation of Stationary Equilibria\label{sec05}}

We now present a very preliminary examination of stationary NE computation (a
more extensive account will be presented at a later time). We want to find a
stationary strategy profile $\widehat{\mathbf{x}}$ such that:%
\begin{equation}
\forall n,\mathbf{s},\mathbf{x}_{n}:Q_{n}\left(  \mathbf{s},\widehat
{\mathbf{x}}_{n},\widehat{\mathbf{x}}_{-n}\right)  \geq Q_{n}\left(
\mathbf{s},\mathbf{x}_{n},\widehat{\mathbf{x}}_{-n}\right)  \label{eq04041}%
\end{equation}
or, equivalently,
\begin{equation}
\forall n,\mathbf{s},\mathbf{x}_{n}:V_{n,\mathbf{s}}\left(  \widehat
{\mathbf{x}}_{n},\widehat{\mathbf{x}}_{-n}\right)  \geq V_{n,\mathbf{s}%
}\left(  \mathbf{x}_{n},\widehat{\mathbf{x}}_{-n}\right)  . \label{eq04042}%
\end{equation}
Now, (\ref{eq04042}) can be rewritten as
\begin{equation}
\forall n,\mathbf{s}:V_{n,s}\left(  \widehat{\mathbf{x}}_{n},\widehat
{\mathbf{x}}_{-n}\right)  =\max_{\mathbf{x}_{n}}V_{n,s}\left(  \mathbf{x}%
_{n},\widehat{\mathbf{x}}_{-n}\right)  =\max_{\mathbf{x}_{n}}\left(  \left[
\widetilde{\Pi}\left(  \mathbf{x}_{n},\widehat{\mathbf{x}}_{-n}\right)
\mathbf{V}_{n}+\mathbf{u}_{n}\right]  _{s}\right)  . \label{eq04033}%
\end{equation}
We define
\begin{equation}
\forall n,\mathbf{s}:F_{n,\mathbf{s}}\left(  \mathbf{V}_{1}|\mathbf{x}\right)
=\max_{\mathbf{x}_{n}}\left(  \left[  \widetilde{\Pi}\left(  \mathbf{x}%
\right)  \mathbf{V}_{n}+\mathbf{u}_{n}\right]  _{s}\right)  \label{eq04044}%
\end{equation}
Then we can rewrite (\ref{eq04033}) as%
\begin{equation}
\forall n:\mathbf{V}_{n}\left(  \widehat{\mathbf{x}}_{n},\widehat{\mathbf{x}%
}_{-n}\right)  =\mathbf{F}_{n}\left(  \mathbf{V}_{n}|\widehat{\mathbf{x}%
}\right)  , \label{eq04045}%
\end{equation}
This system of nonlinear equations will, by the previous analysis, have at
least one solution. We can either solve (\ref{eq04045}) directly or,
alternatively, minimize%
\begin{equation}
J\left(  \mathbf{x}\right)  =\sum_{n=1}^{3}\left \Vert \mathbf{V}_{n}\left(
\widehat{\mathbf{x}}_{n},\widehat{\mathbf{x}}_{-n}\right)  -\mathbf{F}%
_{n}\left(  \mathbf{V}_{n}|\widehat{\mathbf{x}}\right)  \right \Vert ^{2}.
\label{eq04048}%
\end{equation}

One possibility is to solve (\ref{eq04045}) by exhaustive enumeration. For a
given $K$ we have $g\left(  K\right)  =\frac{\left(  K+1\right)  \left(
K+2\right)  }{2}-3K$ states in which there exist strategy choices and, if we
limit ourselves to pure strategies, for each such state we have $2^{3}=8$
strategy outcome combinations. Hence there exist $h\left(  K\right)
=2^{3g\left(  K\right)  }$ possible overall strategy combinations. In the
following table we list the values of $g\left(  K\right)  $ and $h\left(
K\right)  $ as functions of $K$.

\begin{center}%
\[%
\begin{tabular}
[c]{rrrrrr}\hline
$K$ & $3$ & $4$ & $5$ & $6$ & $7$\\ \hline
$g\left(  K\right)  $ & $1$ & $3$ & $6$ & $10$ & $15$\\
$h\left(  K\right)  $ & $2^{3}=8$ & $2^{9}=\allowbreak512$ & $2^{18}%
=\allowbreak262\,144$ & $2^{30}=\allowbreak1073\,741\,824$ & $2^{45}%
=\allowbreak35\,184\, \allowbreak372\,088\,832$\\ \hline
\end{tabular}
\  \  \  \  \  \
\]
\textbf{Table 1}
\end{center}

\noindent It appears that finding a NE\ by exhaustive enumeration is feasible
\ for , for $K<6$; for larger $K$ values the computational burden is probably unmanageable.

An alternative approach is to use an iterative approach, inspired from
\emph{Value Iteration} which provably yields a solutin for \emph{two-player
zero-sum} games \cite{Filar1997}. Based on this, we can use the
\emph{MultiValue Iteration} (MVI) defined as follows.%
\begin{equation}
\forall n,s:V_{n}^{\left(  t+1\right)  }\left(  s\right)  =F_{n,s}\left(
\mathbf{V}_{n}^{\left(  t\right)  }\left(  s|\mathbf{x}^{\left(  t\right)
}\right)  \right)  . \label{eq04049}%
\end{equation}
If (\ref{eq04049}) converges (which is not a priori guaranteed) it can be
proved that the limit will be a solution of (\ref{eq04045}).

Finally, we can use various optimization algorithms to find a global minimum
(which must be equal to zero) of the function
\begin{equation}
J\left(  \widehat{\mathbf{x}}\right)  =\sum_{n=1}^{3}\left \Vert \mathbf{V}%
_{n}\left(  \widehat{\mathbf{x}}_{n},\widehat{\mathbf{x}}_{-n}\right)
-\mathbf{F}_{n}\left(  \mathbf{V}_{n}|\widehat{\mathbf{x}}\right)  \right \Vert
^{2}. \label{eq04050}%
\end{equation}
We have applied several optimization algorithms, provided by the Matlab
software system, to the minimization of (\ref{eq04050}). Most notably, we have
used the following:

\begin{enumerate}
\item The Matlab function \texttt{fmincon} \cite{MatlabFmincon}, which
performs constrained gradient based optimization \cite{Nocedal2014}.

\item The Matlab function \texttt{particleswarm} \cite{MatlabPSO}, which
implements \emph{particle swarm optimization} \cite{Wang2018}\textbf{.}
\end{enumerate}

We have performed several preliminary experiments and have found out that the
best performance is obtained by the MVI\ algorithm; hence we only report
results obtained by this particular algorithm. The main question is whether
the algorithm converges. We have run 100 repetitions of the following experiment.

After randomly selecting a probabilities vector $\mathbf{p\in}\left(
0,1\right)  ^{3}$ we run MVI\ for 150 runs for each $K\in \left \{
3,4,...,9\right \}  $, always initializing with $\mathbf{x}^{\left(  0\right)
}=\mathbf{0}$. If a solution $\widehat{\mathbf{x}}$ of the optimality
equations (i.e., an equilibrium strategy profile)\ is achieved we count this
as one succesful run of the MVI\ algorithm. After the 150 runs are completed,
we compute, for each $K$, the proportion of succesful runs. The results are
plotted in Figure \ref{fig05011}.

\begin{minipage}{0.45\textwidth}
\medskip
\begin{center}
\includegraphics[scale=0.35]{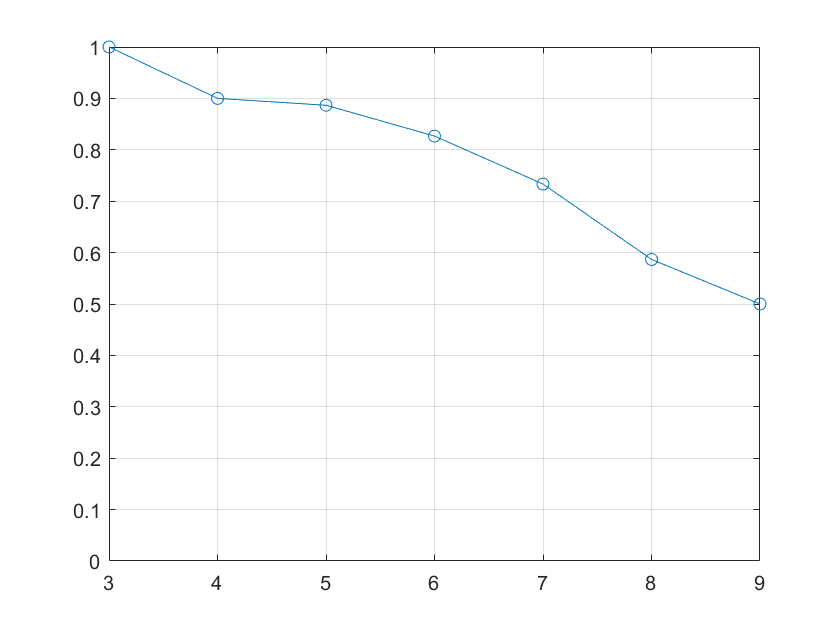}
\captionof{figure}{Proportion of succesful MVI runs for 100 randomly selected $\mathbf{p}$ vectors, as a function of $K$.}
\label{fig05011}
\end{center}
\medskip
\end{minipage}\qquad \begin{minipage}{0.45\textwidth}
\medskip
\begin{center}
\includegraphics[scale=0.35]{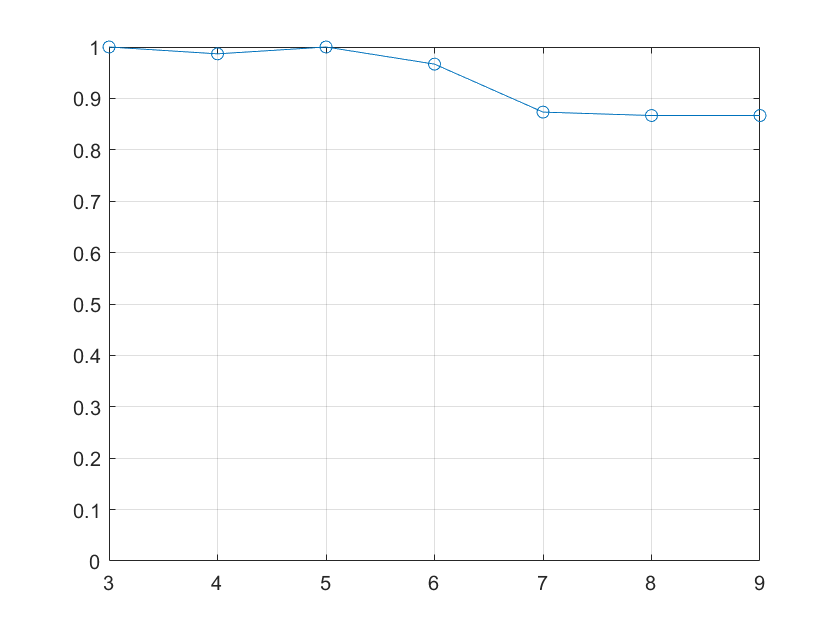}
\captionof{figure}{Proportion of succesful MVI runs for 100 randomly selected $\mathbf{p}$ vectors, as a function of $K$.}
\label{fig05012}
\end{center}
\medskip
\end{minipage}

\noindent It can be seen that MVI\ is \textquotedblleft
relatively\textquotedblright \ succesful. That is, the proportion of convergent
runs is a decreasing function of $K$ (i.e., it is harder to find a solution
for larger $K$ values) and we have a \textquotedblleft
reasonable\textquotedblright \ probability of success for $K\leq7$;
furthermore, even for $K\in \left \{  8,9\right \}  $ we have a 50\% or better
probability of obtaining a solution. It must be emphasized that this is a
lower bound on the performance of MVI. Running MVI\ repeatedly for a specific
$\mathbf{p}$, with random initialization of $\mathbf{x}^{\left(  0\right)  }$,
yields a higher sucess proportion, as our preliminary experiments (not
reported here)\ indicate.

If we look at the MVI\ iteration (\ref{eq04049}) as a dynamical system,
parametrized by $\mathbf{p}$ and $K$, a natural question is this:\ given some
$K$, for which subset $\mathfrak{P}_{K}$ $\subset \left[  0,1\right]  ^{3}$
does $\mathbf{p}\in \mathfrak{P}_{K}$ ensure that (\ref{eq04049})
converges?\ Roughly speaking, the results of Figure \ref{fig05011} indicate
that, for example, the volume of $\mathfrak{P}_{1}$ is 1.00, that of
$\mathfrak{P}_{2}$ around 0.90 and so on. To further investigate this point,
we repeat the above experiment for a restricted set of $\mathbf{p}$ values.
For example, when we fix $p_{1}=1$ and choose $p_{2}$ and $p_{3}$ randomly, we
obtain the results of Figure \ref{fig05012}. Obviously for $\mathbf{p}%
\in \left \{  1\right \}  \times \left[  0,1\right]  ^{2}$ the MVI\ algorithm has
much higher probability of convergence.

We conclude this section by an initial numerical exploration of the following
question:
\[
\emph{how\ much\ does\ a\ player\ benefit\ by\ playing\ optimally?}%
\]
\qquad To answer this, we perform several experiments of the following type.

\begin{enumerate}
\item We fix a particular $\mathbf{p}$ vector and, for $K=9$, use the
MVI\ algorithm to obtain a NE $\left(  \widehat{\mathbf{x}}_{1},\widehat
{\mathbf{x}}_{2},\widehat{\mathbf{x}}_{3}\right)  $ and the corresponding
$V_{1}\left(  \widehat{\mathbf{x}}_{1},\widehat{\mathbf{x}}_{2},\widehat
{\mathbf{x}}_{3}\right)  $.

\item We assign to $P_{1}$ a random strategy $\overline{\mathbf{x}}_{1}$ and
compute his payoff $V_{1}\left(  \overline{\mathbf{x}}_{1},\widehat
{\mathbf{x}}_{2},\widehat{\mathbf{x}}_{3}\right)  $.

\item We compute the difference $\overline{\delta V}_{1}=V_{1}\left(
\widehat{\mathbf{x}}_{1},\widehat{\mathbf{x}}_{2},\widehat{\mathbf{x}}%
_{3}\right)  -V_{1}\left(  \overline{\mathbf{x}}_{1},\widehat{\mathbf{x}}%
_{2},\widehat{\mathbf{x}}_{3}\right)  $.

\item We repeat steps 2 and 3 by assigning the random strategy $\overline
{\mathbf{x}}_{n}$ to $P_{n}$, for $n\in \left \{  2,3\right \}  $.
\end{enumerate}

\noindent For $n\in \left \{  1,2,3\right \}  $, $\overline{\delta V}_{n}$ is a
measure of how much $P_{n}$ benefits by using the equilibrium strategy
$\widehat{\mathbf{x}}_{n}$ rather the random strategy $\overline{\mathbf{x}%
}_{n}$, given that the other two players use the equilibrium strategy. Because
$\left(  \widehat{\mathbf{x}}_{1},\widehat{\mathbf{x}}_{2},\widehat
{\mathbf{x}}_{3}\right)  $ is a NE, we know that $\overline{\delta V}_{n}$
will always be nonnegative; a large $\overline{\delta V}_{n}$ indicates that
$P_{n}$ has a large incentive for using $\widehat{\mathbf{x}}_{n}$.

We also repeat the above suite of experiments comparing the equilibrium
strategy $\widehat{\mathbf{x}}_{n}$ to a uniform strategy $\widetilde
{\mathbf{x}}_{n}$, by which, at every state $\mathbf{s}$, $P_{n}$ selects
equiprobably one of the surviving players; now we compute the quantities
$\widetilde{\delta V}_{n}=V_{1}\left(  \widehat{\mathbf{x}}_{n},\widehat
{\mathbf{x}}_{-n}\right)  -V_{1}\left(  \widetilde{\mathbf{x}}_{n}%
,\widehat{\mathbf{x}}_{-n}\right)  $.

We perform the above experiments for several $\mathbf{p}$ values and present
our results in the following table.

\begin{center}%
\begin{tabular}
[c]{|l|l|l|l|l|l|l|}\hline
&  &  &  &  &  & \\
$\mathbf{p}$ & $\overline{\delta V}_{1}$ & $\overline{\delta V}_{2}$ &
$\overline{\delta V}_{3}$ & $\widetilde{\delta V}_{1}$ & $\widetilde{\delta
V}_{2}$ & $\widetilde{\delta V}_{3}$\\ \hline
$\left(  0.10,0.10,0.10\right)  $ & 0.2357 & 0.2266 & 0.2072 & 0.1995 &
0.1995 & 0.1995\\ \hline
$\left(  0.90,0.00,0.50\right)  $ & 0.0619 & 0.0071 & 0.0726 & 0.0643 &
0.5321 & 0.0863\\ \hline
$\left(  0.80,0.80,0.50\right)  $ & 0.1455 & 0.0339 & 0.0318 & 0.1177 &
0.0462 & 0.0308\\ \hline
$\left(  0.3922,0.8932,0.6634\right)  $ & 0.0088 & 0.1421 & 0.0084 & 0.0091 &
0.1017 & 0.0069\\ \hline
$\left(  0.53,0.20,0.80\right)  $ & 0.0025 & 0.0033 & 0.0948$\times10^{-12}$ &
0.0032 & 0.0026 & 0.0963$\times10^{-12}$\\ \hline
$\left(  0.9000,0.8747,0.2252\right)  $ & 0.0616 & 0.0032 & 0.0006 & 0.0519 &
0.0023 & 0.0009\\ \hline
\end{tabular}

\bigskip

\textbf{Table 1}
\end{center}

\noindent Keeping in mind that the quantities $\overline{\delta V}_{n}$ and
$\widetilde{\delta V}_{n}$ are differences between winning probabilities, the
results of Table 1 indicate that in some cases a player has considerable
incentive to use the equilibrium strategy; this is the case for the
$\mathbf{p}$'s in rows one to four of the table. On the other hand, for the
$\mathbf{p}$'s of rows five and six the player does not gain much by staying
at equilibrium.

\section{Conclusion\label{sec06}}

We have presented a strategic game version of the Three Gamblers' Ruin,
formulated it as a stochastic game and proved that it always has at least one
stationary deterministic NE. We have also briefly investigated the
computational aspects of the game. We believe there is scope for much
additional research on strategic variants of the Gambler's ruin; we conclude
by listing several such variants which we consider worthy of further study in
the future.

\begin{enumerate}
\item The game is played as before but ends when one player is eliminated.
Payoff to each player is his capital at the end of the game.

\item The game is played as previously but the payoff is the \emph{total
\ discounted} capital $\sum_{t}\gamma^{t}s_{n}\left(  t\right)  $.

\item All of the above games can be generalized to games involving $N$ gamblers.

\item The game is played on a graph \ with a gambler on each vertex; gamblers
can only only play against their neighbors.
\end{enumerate}

\end{document}